\providecommand{\tabularnewline}{\\}
\newtheorem{theorem}{Theorem}
\newtheorem{lemma}[theorem]{Lemma}
\newcommand{\openr}{\hbox{${\rm I\kern-.2em R}$}}
\newcommand{\openn}{\hbox{${\rm I\kern-.2em N}$}}
\newcommand{\norm}[1]{\left\lVert#1\right\rVert}
\begin{document}

\title{Causal Inference for Social Network Data}
\author{Elizabeth L. Ogburn\thanks{Department of Biostatistics, Johns Hopkins Bloomberg School of Public Health, Baltimore, MD, USA}, Oleg Sofrygin\thanks{Kaiser Permanente Division of Research, 2000 Broadway, Oakland, CA, 94612, USA}, Iv\'{a}n D\'{i}az\thanks{Division of Biostatistics and Epidemiology, Weill Cornell Medicine, New York, NY, USA}, and Mark J. van der Laan\thanks{Department of Biostatistics, University of California Berkeley, 2121 Berkeley Way, Berkeley, CA, 94720, USA} }

\maketitle


\pagenumbering{gobble} 
\begin{center}
\textbf{Abstract}
\end{center}
We describe semiparametric estimation and inference for causal effects using observational data from a single social network. Our asymptotic results are the first to allow for dependence of each observation on
a growing number of other units as sample size increases. In addition, while previous methods have 
implicitly permitted only one of two possible sources of dependence among
social network observations, we allow for both dependence due to
transmission of information across network ties and for dependence
due to latent similarities among nodes sharing ties. We propose new causal effects that are specifically of interest
in social network settings, such as interventions on network ties and network structure. We use our methods to reanalyze an influential and controversial study that estimated causal peer effects of obesity using social network data from the Framingham Heart Study; after accounting for network structure we find no evidence for causal peer effects.  

\vspace*{.3in}

\noindent\textsc{Keywords}: {Statistical dependence, Causal inference, Social networks, Semiparametric inference}

\newpage
 \pagenumbering{arabic}

\section{Introduction \label{sec:Introduction}}

There is increasing interest in identifying
and estimating causal effects in the contexts of social networks,
that is, causal effects that one individual's behavior, treatment assignment,
beliefs, or health outcome could have on their social contacts'
behaviors, exposures, beliefs, or health statuses. But methodology has not kept apace with interest in causal inference using
data from individuals connected in a social network, and many researchers
have resorted to using inappropriate statistical methods to analyze this
new type of data. There have been
a number of high profile articles that use standard methods like generalized
linear models (GLM) and generalized estimating equations (GEE) to
attempt to infer causal peer effects from network data (e.g. \citealp{christakis2007spread,christakis2008collective,christakis2010social}),
and this work has inspired several research programs that study peer
effects using the same statistical methods \citep{ali2010social,cacioppo2009alone,madan2010social,rosenquist2010spread,wasserman2013comment}.
These methods have come under considerable criticism from
the statistical community \citep{cohen2008obesity,lyons2011spread,shalizi2011homophily}, in part because
these statistical models are not equipped to deal with dependence
across individuals \citep{ogburn2014vaccines}.

Recently, many researchers have developed methods specifically designed to deal with interference--the effect that one subject's treatment may have on others' outcomes--and other forms of causal dependence among subjects. 
However, the methods developed in this context generally
require observing multiple independent groups of units, i.e.\ multiple independent networks, or else they require treatment
to be randomized. Ideally, we would like to be able to perform inference
even when all observations are sampled from a single social network
and when exposures cannot be randomized. Aside from \cite{vanderlaan2012}, upon which our work builds, \citet{tchetgen2020auto} is the only other proposed solution to this problem of which we are
aware. The approach of \citet{tchetgen2020auto} is quite different from ours, primarily because
it assumes that the outcomes of interest comprise a single realization
of a specific type of Markov random field over the network.  This corresponds to certain types of equilibrium distributions and is incompatible with the
traditional causal data generating mechanisms that we work with in
this paper, namely causal structural equation models and directed
acyclic graph (DAG) models (for a discussion of these compatibility
issues see \citealp{lauritzen2002chain, ogburn2020causal}).

We build upon recent work by \cite{vanderlaan2012} that proposed estimators for traditional causal estimands using data from a single collection of interconnected units in which each
unit is independent of all but a small number of other
units, with asymptotic results relying on the number of dependent units being
fixed as the total number of units goes to infinity and with dependence due solely to direct transmission.   
We prove a new central limit theorem for network dependent data, an implication of which is that the estimators proposed by \cite{vanderlaan2012} are consistent and asymptotically normal under more realistic forms of dependence and more general asymptotic regimes.  
As far as we are aware our methods are the first to accommodate latent, unstructured dependence; previous methods (including \citealp{vanderlaan2012} and 
 \citealp{tchetgen2020auto}) are appropriate when all dependence is due to observable direct transmission of information, treatments, or outcomes along network ties. This is a crucial contribution as latent variable dependence is likely to be present in all but the most pathological social network settings. We also introduce novel causal estimands that are specifically of interest in network settings and propose the first conditions that permit causal inference for interventions on network structure. Together these contributions comprise a broad framework for causal inference using observational data from a single social network.
 In this paper we focus on the theory and only briefly touch on estimation; details regarding the implementation and computation of our estimation procedure can be found in a series of companion papers focused on implementation and computation \citep{sofryginTechreport,sofrygin2017conducting,sofrygin2018single}. An R package is also available \citep{sofrygin2015tmlenet}. 

\section{Background and setting}\label{sec:setting}

\subsection{Motivating example}

 In order to demonstrate the importance of principled methods designed to handle the complexity of observational social network data, in Section \ref{sec:data} we reanalyze the Framingham Heart Study data used in \cite{christakis2007spread}, which purported to find evidence that obesity is socially contagious.  The FHS is an epidemiologic cohort study that was originally designed to study cardiovascular epidemiology and is comprised of over $15,000$ participants from the town of Framingham, Massachusetts and neighboring communities. In addition to its important role in epidemiology, the FHS plays a uniquely influential role in the study of social networks and peer effects. In the early 2000s, researchers Christakis and Fowler (CF) discovered an untapped resource buried in the FHS data collection tracking sheets: information on social ties that, combined with existing data on connections among the FHS participants, allowed them to reconstruct the (partial) social network underlying the cohort.  They then
leveraged this social network data to study peer effects for obesity~\citep{christakis2007spread},
smoking~\citep{christakis2008collective}, and happiness~\citep{fowler2008dynamic}.  Researchers have since used the same methods as Christakis and Fowler to study peer effects in the FHS and in many other social network settings (e.g.  \citealp{trogdon2008peer,fowler2008dynamic,rosenquist2010spread}).   

We will refer to a toy example based on FHS throughout. For the purposes of our example we will assume that Framingham is a closed community and that we have data on all members of the community and knowledge of their connections with one another.
Suppose, as has been suggested by some researchers
\citep{fowler2008dynamic}, that happiness exhibits peer effects, and let $Y$ be a continuous measure of happiness, measured simultaneously on each of the $n$ individuals in the FHS network. Below we will describe causal estimands of interest in the context of this motivating example. 

\subsection{Networks and causal estimands} \label{SEM}\label{sub:estimands}

A network is a collection of units, or nodes, and information about the presence
or absence of pairwise ties between them. The presence of a tie between
two units indicates that the units share some kind of a relationship; for example in the FHS ties include familial relatedness, friendship, and shared place of work.
Some types of relationships, like familial relatedness, are mutual; others,
like friendship, may go in only one direction. For
simplicity we will assume all networks are undirected in what follows,
but our methods are equally applicable to directed networks. In an
undirected network, the \textit{degree} of a node is the number of
ties it has. The \textit{alters} of node $i$ are the nodes with which $i$ shares ties. Let $A_{ij}\equiv I\left\{ \mbox{subjects }i\mbox{ and }j\mbox{ share a tie}\right\} $ and $A_{ii}=1$.  
The matrix $\mathbf{A}$ with entries $A_{ij}$ is the \emph{adjacency matrix}
for the network. We will assume throughout that each node is associated with a vector of random
variables, $O_{i}$, including an outcome $Y_{i}$, covariates $C_{i}$,
and an exposure or treatment variable $X_{i}$, all possibly indexed
by time $t$. Throughout we will use bold letters to represent $n$-vectors of random variables, e.g. $\mathbf{Y}$ denotes $(Y_1,...,Y_n)$. 

In applications across the social, political,
and health sciences, researchers are interested in ascertaining the
presence of and estimating causal interactions across alter-ego pairs.
Is there \textit{interference}, i.e.\ does the treatment of subject $i$ have
a causal effect on the outcome of subject $j$ when $i$ and $j$
share a network tie? Are there \textit{peer effects}, i.e.\ does the outcome
of subject $i$ have a causal effect on its alters' future outcomes? Implicit in many of these causal
inquiries is the notion that causal effects only transmit along network ties. 
It is  crucial that the network be completely and accurately measured, since missing ties can result in unmeasured confounding or missing exposure variables. For example, $C_i$, e.g. $i$'s baseline attitude about mental health, may be a confounder not just for the effect of $X_i$ on $Y_i$ but also for the effect of $X_j$ on $Y_j$ if it affects $j$'s treatment uptake and $j$'s happiness.
If the existence of a tie between units $i$ and $j$ is missing in our data, we may fail to control for confounders $C_i$ of causal effects on $Y_j$. We could also fail to account for elements $X_i$ of the exposure affecting $Y_j$, resulting in exposure misclassification. 

Let $Y_{i}(\mathbf{x}^{*})$
denote the \emph{potential} or \emph{counterfactual} outcome of individual $i$ in a hypothetical
world in which the exposure vector $\mathbf{X}$ was set to $\mathbf{x}^*$, e.g. the happiness outcome we would have observed for individual $i$ if all FHS subjects had received meditation training. 
We index potential outcomes with the vector of exposures because, in contrast with i.i.d.\ settings, 
  $Y_i$ may be causally affected by the exposures of other nodes. Additionally, note that these potential outcomes may not be identically distributed.
We will focus throughout on network-wide estimands of the form  $E\left[\bar{Y}_{n}(\mathbf{x}^*)\right]$, where $\bar{Y}_{n}(\mathbf{x}^*)=\frac{1}{n} \sum_{i=1}^n Y_{i}(\mathbf{x}^{*})$: the expected potential outcome, averaged across network nodes, in the hypothetical
world where $\mathbf{X}$ is set to $\mathbf{x}^*$.
The distribution of $Y_{i}(\mathbf{x}^{*})$ depends on which nodes $j$ are causally related to $i$ -- that is, it depends on the observed adjacency matrix $\mathbf{A}$ and, through that, on $n$. 
Therefore, all of our estimands and estimators condition on the adjacency matrix $\mathbf{A}$ and on $n$. 
This same conditioning is implicit in other work on causal inference for data from a single network, since no methods currently exist to jointly model the network structure given by $\mathbf{A}$ \emph{and} the random variables associated with the nodes, but this is rarely made explicit. Estimands of the form  $E\left[\bar{Y}_{n}(\mathbf{x}^*)\right]$ are \emph{data-dependent} or \emph{data adaptive} \citep{hubbard2016statistical}, meaning that their value depends on the observed data sample through $\mathbf{A}$ and $n$. 
This changes but does not undermine their interpretation. Consider a hypothetical intervention to make FHS participants happier. Using data from FHS we may not be able to learn about the expected average potential outcome under this intervention for the U.S. population (unless the effect of the intervention does not depend on $n$ or $\mathbf{A}$), but we may be able to learn about the expected average potential outcome under this intervention for other mid-sized, middle class, New England suburbs. 

We will also, at times, discuss causal estimands conditional on covariates $\mathbf{C}$, e.g. $E\left[\bar{Y}_{n}(\mathbf{x}^*) |\mathbf{C}\right]$.  As $n \rightarrow \infty$, $E\left[\bar{Y}_{n}(\mathbf{x}^*) |\mathbf{C}\right] \rightarrow E\left[\bar{Y}_{n}(\mathbf{x}^*)\right]$, however for finite $n$ conditional estimators have smaller variance and inference about the conditional estimand
cannot be interpreted as inference about the marginal estimand. Inference about conditional estimands could be informative about other, similar networks that share a common distribution of $\mathbf{C}$. For example, if $C$ is age, then we might be able to use inference about  $E\left[\bar{Y}_{n}(\mathbf{x}^*) |\mathbf{C}\right]$ to learn about the expected average potential outcome of the intervention in other mid-sized, middle class New England suburbs with a similar age distribution to Framingham. If $C$ is not an effect modifier, then causal effects are the same conditional on or marginalized over $\mathbf{C}$ and conditioning on $\mathbf{C}$ does not limit our ability to generalize causal effects. 


We consider three families of causal interventions in this paper: static, dynamic, and stochastic. Static interventions set $X_i$ to a user-specified, fixed value $x^*$ that does not depend on unit $i$'s characteristics. The literature on causal inference with interference has largely focused on static interventions to date. This family includes, for example, the average potential outcome in a world in which everyone received exposure $X=1$ compared to a world in which everyone received exposure $X=0$.  In the interference literature, this effect is known as an \emph{overall} effect \citep{hudgens2008toward,tchetgen2010causal}. It also includes versions of \emph{direct} and \emph{indirect} effects, which quantify the effect of a unit's own treatment holding other treatments constant and the effect of others' treatments holding a unit's own treatment constant, respectively, though the most common definition of direct and indirect effects requires the presence of multiple independent groups of individuals and will not apply in our single network setting.
A dynamic intervention assigns exposures as a user-specified, deterministic function
of covariates $\mathbf{C}$, e.g. an FHS happiness intervention that assigns subjects to talk therapy if their baseline happiness is below a certain threshhold.  Exposure is deterministically specified conditional on covariates
but is but allowed to depend ``dynamically'' on covariates.  
A stochastic intervention \citep{munoz2012population,haneuse2013estimation,young2014identification} assigns exposures as a user-specified, random function. This allows exposure to be a stochastic function of covariates, e.g. an FHS happiness intervention that assigns subjects to talk therapy with a probability that depends on their baseline happiness. In Section \ref{sec:Extensions} we use stochastic interventions to define some kinds of interventions on network structure. 


\subsection{Inferential and asymptotic framework\label{sub:Asymptotics}}


Throughout we assume that we have observed a complete network of size $n$. Our methods are not appropriate for a random sample from a superpopulation because such a random sample would miss crucial confounders and exposures, as described in Section \ref{SEM} above. The data may exhibit \emph{latent variable dependence} -- unstructured dependence resulting from latent
traits that are more similar for observations that are close in the network
than for distant observations -- up to a distance of two network ties. Latent variable dependence will be present in data sampled from a
network whenever observations from nodes that are close to one another
are more likely to share unmeasured traits than are observations from
distant nodes. Homophily, or the tendency of people who share similar
traits to form network ties, is a paradigmatic example of latent variable
dependence.
Additionally, in networks, ties present opportunities to transmit causal effects or information, and
\emph{dependence due to direct transmission} will be present whenever a subject's treatments, outcomes, or covariates may affect
their alters' treatments, outcomes, or covariates. 

We will show in Section \ref{sub:identification} that, despite this dependence, the causal estimands of interest are identified by functionals of the observed data distribution $P(\mathbf{C},\mathbf{X},\mathbf{Y})$, which depends on $n$ and $\mathbf{A}$. This identifying functional is our target of inference. It is a parameter of the observed data distribution for a network of size $n$, i.e.\ a parameter  \emph{of the data generating distribution that gave rise to the data at hand}. It is an unknown parameter rather than an observed quantity because the data we see comprise a single, random draw from $P(\mathbf{C},\mathbf{X},\mathbf{Y})$. 
In i.i.d.\ settings inference licenses extrapolation to populations for which the data are representative, in the sense of being drawn from (nearly) the same underlying data generating distribution. The same principle applies here: we might want to estimate a parameter of the FHS data generating distribution because we think that it is informative about the social networks found in other mid-sized, middle class New England suburbs. In i.i.d.\ settings, if a causal effect does not depend on a particular aspect of the data-generating distribution, then we are licensed to extrapolate to other populations that differ from the data with respect to that aspect. Similarly, here, if we have reason to believe that the causal effect of interest is homogeneous in $n$ and relevant features of $\mathbf{A}$, then we might be justified in using an analysis of the FHS data to extrapolate to other small, mid-sized, and large middle-class New England suburbs. All of our estimands and estimators condition on $n$ and $\mathbf{A}$, but this only matters for interpretation if the estimands do in fact vary with $n$ and $\mathbf{A}$.

In theory, we could learn about the parameter of interest from an infinite number of draws (each of size $n$) from the underlying distribution. This is the inferential framework that corresponds to most previous work on causal inference with interference and/or social network data, in which it is typically assumed that multiple i.i.d.\ groups or networks are observed and asymptotics are in the number of  groups. In this paper we tackle the more challenging and more realistic setting of data from a single interconnected network, and we perform inference using data from a single draw of size $n$.  We will use $n \rightarrow \infty$ asymptotics in the service of finite sample inference. That is, we will prove that our estimators are asymptotically normal under general and relatively realistic conditions; when these conditions are met and for large enough $n$, this licenses the use of asymptotic approximations to perform inference the parameters indexing the data generating distribution for finite $n$.  
This, too, is analogous to the i.i.d.\ setting, where asymptotic results are used to license finite sample approximations for large enough $n$.

Research indicates that social networks often have the small-world
property (sometimes referred to as the ``six degrees of separation''
property), meaning that the average distance between two nodes is
small \citep{watts1998collective}. Therefore distances in real-world
networks may grow slowly (or not at all!) with sample size, making it difficult to apply existing dependent data theory and methods to this setting and necessitating a new central limit theorem that we prove in the Appendix.  Formalizing asymptotic growth of network-generating models as $n$ goes to infinity is an active area of research, especially for networks that, like social networks, have sparse limits \citep{caron2017sparse} and is beyond the scope
of this paper. Because we condition on the graph itself through the adjacency matrix $\mathbf{A}$, we do not need to include the graph-generating mechanism in our asymptotic regime, obviating many of the complex issues surrounding asymptotic growth of networks.
We take for granted a sequence of networks with increasing
$n$, represented by adjacency matrices $\mathbf{A}_n$, such that key features of the network topology, e.g. degree distribution or clustering, are preserved. The two crucial features of this sequence are (1) that same structural equation model holds for each network in the sequence 
and (2) that the maximum degree grows slower than rate $\sqrt{n}$. 
This asymptotic regime is consistent with realistic social networks.  In particular, social networks tend to have heavy-tailed degree distributions,
with most nodes having low degree but a non-trivial proportion of
nodes having high degree, with the maximum degree dependent on the
size of the network, resulting in asymptotically sparse networks \citep{newman2003social}. 
This is incompatible with the assumption of bounded degrees,
which has been used in previous methods for inference about observations
sampled from a single social network. See Section \ref{sec:hubs} for a discussion of inference in the presence of highly connected "hub" nodes. 


\subsection{Review of influence functions for non-i.i.d.\ data and data-dependent estimands}

In order to employ semiparametric theory in this  dependent data setting, we draw on classical semiparametric theory as presented in \cite{bickel1998efficient}, 
and on work that extending these results to non-i.i.d.\ settings \citep{mcneney2000application, vanderlaan2012, janssen2013convolution, van1996convolution, hubbard2016statistical}. Many aspects of the classical theory, though developed and presented for i.i.d.\ data, are agnostic to the i.i.d.\ assumption. Key differences are that we index our parameters with $n$; we allow rates of convergence to be any function $C_n$ of $n$, possibly slower than $\sqrt{n}$; and we do not assume an i.i.d.\ CLT. 

Let $P$ be the true distribution of the observed data $\mathbf{O}=(O_1,...,O_n)$ and let $\mathcal{M} \ni P$ be a model that may place some restrictions on $P(\mathbf{O})$; $\mathcal{M}$ could be nonparametric, semiparametric, or parametric. Let $\{P^h_{\epsilon}:\epsilon\in(-\delta,\delta)\} \subset \mathcal{M}$ be a rich class of one-dimensional parametric submodels, indexed by a direction $h$ across a set $H$, such that $P^h_{\epsilon=0}=P$ for all $h$. Let $S^h=\frac{d}{d\epsilon}\log P^h_{\epsilon}(O)|_{\epsilon=0}$ be the corresponding score function of the $h$-specific path. 
Then the pathwise derivative of parameter $\psi_{n}$ at $P$ along the path defined by $P^h_{\epsilon}$ is given by $\frac{d}{d\epsilon}\psi_{n}(P^h_{\epsilon})$ at $\epsilon =0$, and by the Riesz representation theorem $
 \frac{d}{d\epsilon}\psi_{n}(P^h_{\epsilon})|_{\epsilon=0}=E[\varphi_{n}(O)S^h(O)]$
 for some mean zero function $\varphi{}_{n}(O)$.  We say that $\psi_{n}$ is pathwise differentiable at $P$ if this equation holds for all submodels in the class defined above, and then $\varphi{}_{n}(O)$ is a gradient for $\psi_{n}$. 
Pathwise differentiability requires only that the pathwise derivative be a bounded linear operator from the tangent space (i.e. the closure of the linear span of the scores $S^h$) to the real line, and therefore does not rely on data being i.i.d..  
In general each gradient for $\psi_{n}$ is the influence function of an estimator of $\psi_{n}$, and in subsequent sections we will use these terms interchangeably. 
In a nonparametric model ${\cal M}$ that places no restrictions on $P(\mathbf{O})$, there can be no more than one gradient for any parameter $\psi_n$, but in semiparametric or parametric models $\psi_n$ will have many gradients. If $\varphi_n$ is a gradient for $\psi_n$ in a model $\mathcal{M}$ it will also be a gradient for $\psi_n$ in any submodel of $\mathcal{M}$, and in particular the nonparametric gradient will be a gradient in any semiparametric or parametric model for $P(\mathbf{O})$. 

We say that an estimator $\hat\psi(\mathbf{O})$ of $\psi_n$ has influence function $\varphi_n$ if $\hat\psi(\mathbf{O})-\psi_n=\frac{1}{n}\sum_{i=1}^{n}\varphi_n(\{O_{i}\})+o_{P}(\sqrt{C_{n}})$, where $\{O_{i}\}$ represents a subset of $\mathbf{O}$ but may include variables indexed by $j$ for $j\neq i$ in addition to those that are indexed by $i$. The influence function of an estimator describes its first order behavior;  the limiting distribution of an asymptotically linear estimator is given by the limiting distribution of $\frac{1}{n}\sum_{i=1}^{n}\varphi_n(\{O_{i}\})$. 
Under a central limit theorem, $\sqrt{C_{n}}\frac{1}{n}\sum_{i=1}^{n}\varphi_n(\{O_{i}\})\overset{D}{\rightarrow}N(0,\sigma^{2})$ where the asymptotic variance $\sigma^{2}$ of an estimator is given by the asymptotic variance of $\frac{1}{n}\sum_{i=1}^{n}\varphi_n(\{O_{i}\})$. Note that although the parameter of interest can depend on $n$, its limiting distribution does not. Because an influence function depends on $\psi_n$ and has mean $0$ at the truth, it can often be used as an estimating function; an estimator that solves $\frac{1}{n}\sum_{i=1}^{n}\varphi_n(\{O_{i}\})=0$ will have influence function $\varphi_n$ (provided that estimates of any unknown nuisance parameters satisfy appropriate regularity conditions). 

\section{Methods}
\label{sec:methods}

The estimating procedure that we describe in this section is based on 
 \citet{vanderlaan2012}, but we generalize the results
to a broader class of causal effects, to more general and pervasive
forms of dependence among observations, and to more realistic asymptotic regimes. 
We focus throughout on single time-point treatments. Longitudinal interventions
are also possible under the theory introduced here, in particular because the longitudinal extension of our structural equation model implies that, for each time point and after conditioning on the past, the data exhibit no more dependence than the single-time point case. We leave the details for future work. We state our results under the assumption that all variables take
values in discrete sets. Analogous results are valid for other types
of random variables: it is straightforward to extend our notation
and central limit theorem to continuous covariates and outcomes,  
but continuous treatments, which are not standard in causal inference, are more challenging (see \citealp{vanderlaan2012}). For estimands that condition on $\mathbf{C}$, our theoretical results hold immediately if $\mathbf{C}$ is discrete but may require additional assumptions for continuous $\mathbf{C}$.

\subsection{Structural equation model} \label{sub:SEM}

We use a causal structural equation model (SEM) to define the causal
effects of interest but note that
analogous definitions may be achieved within the potential outcome
framework \citep{pearl2012causal}.  

Let $K_{i}=\sum_{j=1}^{n}A_{ij}$ be the degree of node $i$. The degree of $i$ and the degrees of
$i$'s alters may be included in the covariate vector $C_{i}$. Define $\mathbf{Y}=(Y_{1},...,Y_{n})$ and $\mathbf{C}$ and $\mathbf{X}$
analogously. We assume that the data are generated by sequentially evaluating the
following set of equations: 
\begin{align}
C_{i} & =f_{C}\left[\varepsilon_{C_{i}}\right] & i=1,\ldots,n\nonumber \\
X_{i} & =f_{X}\left[\left\{ C_{j}:A_{ij}=1\right\} ,\varepsilon_{X_{i}}\right] & i=1,\ldots,n\nonumber \\
Y_{i} & =f_{Y}\left[\left\{ X_{j}:A_{ij}=1\right\} ,\left\{ C_{j}:A_{ij}=1\right\} ,\varepsilon_{Y_{i}}\right] & i=1,\ldots,n,\label{eq:original SEM}
\end{align}
where $f_{C}$, $f_{X}$, and $f_{Y}$ are unknown and unspecified
functions that may depend on $K_i$ and $\varepsilon_{i}=(\varepsilon_{C_{i}},\varepsilon_{X_{i}},\varepsilon_{Y_{i}})$
is a vector of exogenous, unobserved errors for individual $i$. The errors may be correlated across units, as described below. This set of equations represents an observational setting; all our results apply equally to experimental settings in which $f_{X}$ is constant or does not depend on $\mathbf{C}$.  
Time ordering
is a fundamental component of a structural causal model. We assume that $C$ is first drawn for all units, so that, in addition
to $C_{i}$, the other components of the vector $\mathbf{C}$--corresponding to $i$'s social contacts--may affect the
value of $X_{i}$, and similarly for $X$.

In addition, we make the following assumptions on the error terms from the SEM:
\begin{align}
 &\mbox{The vectors } (\varepsilon_{X_{1}},...,\varepsilon_{X_{n}}),(\varepsilon_{Y_{1}},...,\varepsilon_{Y_{n}}), \mbox{ and } (\varepsilon_{C_{1}},...,\varepsilon_{C_{n}}) \mbox{ are independent},\tag{A1}\label{eq:independent errors}\\
 & \varepsilon_{X_{1}},...,\varepsilon_{X_{n}} \mbox{ are identically distributed and } \varepsilon_{Y_{1}},...,\varepsilon_{Y_{n}} \mbox{ are identically distributed,}\tag{A2a}\label{eq:(X,Y) i.i.d.}\\
 & \varepsilon_{X_{i}}\perp\varepsilon_{X_{j}}\mbox{ and }\varepsilon_{Y_{i}}\perp\varepsilon_{Y_{j}} \mbox{ for }i,j\mbox{ s.t.} {A_{ij}=0} \mbox{ and } \exists!k \mbox{ with }A_{ik}=A_{kj}=1\tag{A2b}\label{eq:(X,Y)ind}\\
 & \varepsilon_{C_{1}},..., \varepsilon_{C_{n}}\mbox{ are identically distributed, and}\tag{A3a}\label{eq:C identically distributed}\\
 & \varepsilon_{C_{i}}\perp\varepsilon_{C_{j}}\ensuremath{\mbox{ for }i,j\mbox{ s.t. \ensuremath{A_{ij}=0} and \ensuremath{\ensuremath{\exists!k}} with }A_{ik}=A_{kj}=1.}\tag{A3b}\label{eq:independent C errors}
\end{align}
A DAG corresponding to the SEM in (1) is given in Figure 1(a).  Each node represents an $n$-vector; independence relationships across individuals (i.e. assumptions (A2) and (A3)) are not encoded in this DAG.

This SEM or DAG model encodes the assumption that $\mathbf{C}$ suffices to
control for confounding of the effect of $\mathbf{X}$ on $\mathbf{Y}$;
this is a version of the \emph{conditional ignorability} or \emph{no unmeasured confounding} assumption that is typically made in i.i.d.\ settings. In particular, any latent variable dependence affects $\mathbf{X}$
and $\mathbf{Y}$ separately; in general a latent variable that affects
$\mathbf{X}$ and $\mathbf{Y}$ jointly is an unmeasured confounder and constitutes a violation
of this assumption.  Assumptions (\ref{eq:(X,Y)ind}) and (\ref{eq:independent C errors}) relax the typical assumption of independent errors to permit latent variable dependence up to a distance of
two network ties. 
Assumption (A3) can be omitted if attention is restricted to causal effects conditional on $\mathbf{C}$. 

Although our main result, given in Theorem 1 below, holds for inference
under assumptions (\ref{eq:(X,Y) i.i.d.})--(\ref{eq:independent C errors}),
some estimation strategies are available only when stronger versions
of assumptions (\ref{eq:(X,Y)ind}) and (\ref{eq:independent C errors})
hold. We therefore introduce alternative assumptions 
\begin{align}
 & \varepsilon_{X_{1}},...,\varepsilon_{X_{n}}\mid\mathbf{C}\mbox{ are i.i.d.\ and \ensuremath{\varepsilon_{Y_{1}},...,\varepsilon_{Y_{n}}\mid\mathbf{C},\mathbf{X}} are i.i.d.},\mbox{ and}\tag{A4}\label{eq:i.i.d. x and y}\\
 & \varepsilon_{C_{i}},i=1,...,n\mbox{, are i.i.d.}\tag{A5}\label{eq:i.i.d. c}
\end{align}
These assumptions, which were made in \cite{vanderlaan2012}, are consistent with dependence due to direct transmission but not latent variable dependence.

\begin{figure}
\begin{centering}
\begin{tikzpicture}[>=stealth, node distance=1.3cm]
    \tikzstyle{format} = [ very thick, circle, minimum size=5.0mm,
	inner sep=0pt]
    \tikzstyle{square} = [very thick, rectangle, draw]

		\begin{scope}[yshift=-.50cm]
		   \path[very thick, ->]
                        node[format] (x) {$\mathbf{X}$}
                        node[format, above right of=x, xshift=.25cm, yshift=.5] (c) {$\mathbf{C}$}
                        node[format, below right of=c, xshift=.5cm] (y) {$\mathbf{Y}$}

                        (x) edge[black] (y)
                        (c) edge[black] (x)
                        (c) edge[black] (y)
                        
                        
                    			
			node[below of=c, xshift=0cm, yshift=-1.8cm] (l) {(a)}

			;
	\end{scope}
        \begin{scope}[xshift=8.0cm]
                \path[very thick, ->]
                        node[format] (w) {$\mathbf{W}$}
                        node[format, above of=w, yshift=.5] (c) {$\mathbf{C}$}
                        node[format, below left of=w, xshift=-1cm] (x) {$\mathbf{X}$}
                        node[format, below of=w] (v) {$\mathbf{V}$}
                        node[format, below right of=w, xshift=1cm] (y) {$\mathbf{Y}$}

                        (x) edge[red] (v)
                        (w) edge[black] (x)
                        (w) edge[black] (y)
                        (c) edge[red] (w)
                        (v) edge[black] (y)

                        

		node[below of=v, xshift=0cm, yshift=0cm] (l) {(b)}
                    
                ;
        \end{scope}
\end{tikzpicture} 
\par\end{centering}
\caption{(a) DAG corresponding to the SEM in Equation (1); (b) DAG corresponding to the SEM in Equation (A6). The red arrows are deterministic. 
 }
\label{fig:graph} 
\end{figure}

In principle, the
models defined by assumptions (\ref{eq:independent errors})-(\ref{eq:independent C errors})
or by assumptions (\ref{eq:independent errors}), (\ref{eq:i.i.d. x and y}),
and (\ref{eq:i.i.d. c}) suffice to nonparametrically identify many causal estimands of interest. However, in practice (and in order to facilitate the definition and identification of some kinds of dynamic and stochastic estimands) we may need to make simplifying
assumptions on the forms of $f_{X}$ and $f_{Y}$. This is done by
considering summary functions $s_{C}$ and $s_{X}$ and random variables
$W_{i}=s_{C,i}\left(\left\{ C_{j}:A_{ij}=1\right\} \right)$ and $V_{i}=s_{X,i}\left(\left\{ X_{j}:A_{ij}=1\right\} \right)$
such that the model may be written as 
\begin{align*}
C_{i} & =f_{C}\left[\varepsilon_{C_{i}}\right] & i=1,\ldots,n\\
X_{i} & =f_{X}\left[W_{i},\varepsilon_{X_{i}}\right] & i=1,\ldots,n\\
Y_{i} & =f_{Y}\left[V_{i},W_i,\varepsilon_{Y_{i}}\right] & i=1,\ldots,n. \tag{A6} \label{summary functions}
\end{align*}
For example, $s_{C,i}\left(\left\{ C_{j}:A_{ij}=1\right\} \right)=\left(C_{i},\sum_{j:A_{ij}=1}C_{j}\right)$
implies that the exposure and outcome of node $i$ only depend on $i$'s
own covariate value and on the sum of the covariate values of $i$'s alters. Analogously, $s_{X,i}\left(\left\{ X_{j}:A_{ij}=1\right\} \right)=\left(X_{i},\sum_{j:A_{ij}=1}X_{j}\right)$
is an example of a summary function for $\mathbf{X}$. It may be the case that $f_X$ and $f_Y$ depend on different summary functions of $\mathbf{C}$, $s_{C,X}$ and $s_{C,Y}$ respectively. In this case, without loss of generality, we define $s_C=W$ to be equal to $(s_{C,X},s_{C,Y})$. The natural choice of summary functions may not have the same length for all $i$. For example, it could be natural to define the summary function $s_C$ to be $C_{i}$ for units with no alters and $(C_{i},\sum_{j:A_{ij}=1}C_{j})$ for units with alters. In order to enforce that  $s_{C,i}$ and $s_{X,i}$ have the same length for all $i$, we set the length of each $s_{C,i}$ and $s_{X,i}$ to its maximum over $i$, and fill in any empty entries with the value $undefined$. 
For convenience
we use the notation $s_{C,i}(\mathbf{C})$ and $s_{X,i}(\mathbf{X})$
below; however, this notation should not undermine the important fact
that $W_{i}$ can only depend on the subset $\left\{ C_{j}:A_{ij}=1\right\} $ of $\mathbf{C}$
and $V_{i}$ can only depend on the subset
 $\left\{ X_{j}:A_{ij}=1\right\} $ of  $\mathbf{X}$,
as these are the only components of $\mathbf{C}$ and $\mathbf{X}$
that are parents of $\mathbf{X}$ and $\mathbf{Y}$, respectively,
in the network-as-structural-causal-model.
For notational convenience, in what follows we augment the observed
data random vector with $V_{i}$ and $W_{i}$, recognizing that these
are deterministic functionals of $C_{i}$ and $X_{i}$, defined by
$s_{X,i}$ and $s_{C,i}$, and are therefore technically redundant. 
A DAG corresponding to the SEM in (A6) is given in Figure 1(b). 
Although this SEM implies that we can use $\mathbf{W}$ and $\mathbf{V}$ in place of $\mathbf{C}$ and $\mathbf{X}$ for identification and estimation of causal effects, for generality and clarity we will continue to use $\mathbf{C}$ and $\mathbf{X}$ wherever we do not require the assumption of fixed dimensionality. 

Note that, although the variance-covariance structure of the SEM given
in (\ref{eq:original SEM}) is affected by the dependence allowed
in (A2b) and (A3b), the mean structure is unaltered by the choice
of assumptions (A2) and (A3) or (A4) and (A5), ruling out the possibility that any latent sources of dependence introduce confounding. In particular, while the SEM allows limited forms of homophily to induce dependence it rules out confounding due to homophily, where latent similarities affect both the exposure and the outcome and thereby induce confounding. Ruling this out is a strong and often unrealistic assumption \citep{shalizi2011homophily}. Because the mean structure is unaffected by the dependence permitted by assumptions (A2) and (A3), any estimator that is unbiased under (A4) and (A5) will remain unbiased when these are relaxed to (A2) and (A3). In Section \ref{sub:identification}
we discuss nonparametric identification of causal parameters, which relies on the assumption of no unmeasured confounding but
is agnostic to the choice of the weaker or stronger independence assumptions.

\subsection{Definition and nonparametric identification of causal effects} \label{sub:identification}

We first define notation that we will use throughout the remainder
of the paper for functionals of the distribution of $\mathbf{O}$.
Let  $p_{Y}(\mathbf{y}|\mathbf{v,w})=P\left(\mathbf{Y=y}|\mathbf{V=v,W=w}\right)$ and
 $p_{Y,i}(y|v,w)$ $=$ 
$P\left(Y_{i}=y|V_{i}=v,W_i=w\right)$.  Let $p_{C}(\mathbf{c})=P\left(\mathbf{C}=\mathbf{c}\right)$.  We will use $g$ to denote the propensity score distributions: $g(\mathbf{x}|\mathbf{w})=P\left(\mathbf{X}=\mathbf{x}\mid\mathbf{W}=\mathbf{w}\right)$ and
$g_{i}(x|w)=P\left(X_{i}=x|W_{i}=w\right)$, and $h$ for the corresponding distributions of $\mathbf{W}$ and $\mathbf{V}$: 
$h_{i}(v|w)=P\left(V_{i}=v|W_i=w\right)$ and $h_{i}(v,w)=P\left(V_{i}=v,W_i=w\right)$. Let $h^*$ denote this distribution under the intervention of interest: $h_{i}^*(v|w)=P\left(V_{i}^{*}=v|W_i=w\right)$ and $h_{i}^*(v,w)=P\left(V_{i}^{*}=v,W_i=w\right)$. This will be degenerate for static and deterministic interventions but not for stochastic interventions. 
Note that $h_{i}^*$ are determined by $g$, $p_{C}$, and the user-specified intervention and is therefore an observed data quantity. Finally,
$m(v,w)=\sum_{y}y\,p_{Y}(y|v,w)$ is the conditional expectation of $Y$
given $V=v,W=w$.

A hypothetical intervention on $\mathbf{X}$ replaces  $g(x|c)$ with a new, user-specified function $g^{*}$; under Assumption (\ref{summary functions}) this is equivalent to replacing $h$ with $h^{*}$. Equivalently, the intervention replaces $f_{X}$ in the SEM with a new, user-specified function.  For example, a deterministic intervention that sets $X_{i}$ to
a user-given value $x_{i}^{*}$ for $i=1,...,n$ is given by 
\begin{align*}
C_{i} & =f_{C}\left[\varepsilon_{C_{i}}\right] & i=1,\ldots,n\\
X_{i} & =x_{i}^{*} & i=1,\ldots,n\\
Y_{i}^* & =f_{Y}\left[v_{i}^*,W_i,\varepsilon_{Y_{i}}\right] & i=1,\ldots,n,
\end{align*}
where $\mathbf{x}^{*}=(x_{1}^{*},\ldots,x_{n}^{*})$. Here $Y_{i}^{*}$
denotes the potential or counterfactual outcome of individual $i$ in a hypothetical
world in which $P(\mathbf{X}=\mathbf{x}^{*})=1$. Analogously, $v_{i}^{*}=s_{X,i}(\mathbf{x}^{*})$
is a counterfactual  variable in a hypothetical world in which
$P(\mathbf{X}=\mathbf{x}^{*})=1$. Note that, although $v_{i}^*$
is counterfactual, its value is determined by the user-specified value $\mathbf{x}^{*}$,
and it is therefore known. In the case of a stochastic intervention, $X_{i}^{*}$ and $V_{i}^{*}$ are random rather than fixed variables but their distributions are still known by design. In particular the distribution of $V_{i}^{*}|W_{i}$ is given by $h^{*}$. Although any intervention on $\mathbf{X}$ induces an intervention on $\mathbf{V}$, not all conceivable interventions on $\mathbf{V}$ will be compatible with the observed network. For example, suppose $V_{i}$ is the average of $X_{i}$ and $X_{j}$ for $A_{ij}=1$, with $X$ binary. Even if $0$ and $1$ are both in the support of $V$, it would not be possible to simultaneously assign $V_{i}^{*}=0$ and $V_{j}^{*}=1$ for $i,j$ s.t. $A_{ij}=1$. In order to avoid such contradictions we recommend defining interventions in terms of $\mathbf{X}$ and $g$ prior to expressing the induced intervention on $\mathbf{V}$ and $h$. 
The causal parameter of interest 
is the expected average potential outcome $E\left[\bar{Y}^{*}_{n}\right]$ defined in Section \ref{sub:estimands}. 

In addition to the assumption of no unmeasured confounding, identification of $E\left[\bar{Y}^{*}_{n}\right]$
relies on the positivity assumption that, for all $i$,
\begin{align}
P(V_i=v|s_{C,i}(\mathbf{C})=s_{C,i}(\mathbf{c}))>0 & \mbox{\,\,\ for all $\mathbf{c}$ in the support of $\mathbf{C}$ and for all \ensuremath{v} in the range of \ensuremath{V^*}.}\tag{A7}\label{eq:positivity}
\end{align}
This assumption states that, within levels of $\mathbf{C}$, the values
of $V$ determined by the hypothetical intervention
have positive probability under the observed data generating distribution.
Now the causal parameter $E\left[\bar{Y}^{*}_{n}\right]$ for a static intervention
is nonparametrically identified as follows:
\begin{align}
E\left[\bar{Y}^{*}_{n}\right]	&=\frac{1}{n}\sum_{i=1}^{n}E[Y_{i}^*] \notag \\
	&=\frac{1}{n}\sum_{i=1}^{n}\sum_{\mathbf{c}}E[Y_{i}^*|\mathbf{C=c}]p_{C}(\mathbf{c}) \notag \\
	&=\frac{1}{n}\sum_{i=1}^{n}\sum_{\mathbf{c}}E[Y_{i}|V_i=s_{X,i}(\mathbf{x}^{*}),W_i=s_{C,i}(\mathbf{c})]p_{C}(\mathbf{c}) \notag \\
	&=\frac{1}{n}\sum_{i=1}^{n}\sum_{\mathbf{c}}m\{s_{X,i}(\mathbf{x}^{*}),s_{C,i}(\mathbf{c})\}p_{C}(\mathbf{c}) \label{eq:identification form 2}
\end{align}
where the third equality follows from the assumption of conditional unconfoundedness. Assumption (A7) ensures that the conditioning event has positive probability and therefore that the conditional expectation is well-defined. Assumption (A6) is not required for nonparametric identification but we have used it above to simplify notation (and we will rely on it for estimation and inference).  
This identification result is equivalent to  
\begin{equation}
E\left[\bar{Y}^{*}_{n}\right]=\frac{1}{n}\sum_{i=1}^{n}E\left[m(v_{i}^{*}, W_i)\right]=\frac{1}{n}\sum_{i=1}^{n}\sum_{w}m(v_i^*,w)h_i^{*}(v_i^*,w).\label{eq:identification form 1}
\end{equation}
From \eqref{eq:identification form 2}, it is clear that the conditional
 parameter $E\left[\bar{Y}^{*}_{n}\mid\mathbf{C}=\mbox{\textbf{c}}\right]$
is identified by $\frac{1}{n}\sum_{i=1}^{n}m\{s_{X,i}(\mathbf{x}^{*}),s_{C,i}(\mathbf{c})\}$.

Throughout, we will denote the functional of the observed data that identifies a causal estimand of interest as $\psi_n$. This is the statistical parameter about which we would like to perform inference.

\subsection{Estimation}\label{sub:Estimation}

Estimation and inference for $E\left[\bar{Y}^{*}_{n}\right]$ require
a statistical model $\mathcal{M}$ for the distribution of the observed
data $P(\mathbf{O})$. That is, $\mathcal{M}$ is a collection of
distributions over $\mathbf{O}$ of which one element is the true
data-generating distribution. The only restriction placed on the observed data distribution by Assumptions (A1)-(A3b) is that, for for $i,j:A_{ij}=0\text{ and }\text{\ensuremath{\exists}}!k\text{ with }A_{ik}=A_{kj}=1$ we have that $C_{i}\perp C_{j}$, $X_{i}\perp X_{j}|\mathbf{C}$, and $Y_{i}\perp Y_{j}|\mathbf{C},\mathbf{X}$; Assumptions (A4) and (A5) -- and previous work such as \cite{vanderlaan2012} and \cite{sofryginTechreport}-- restrict these conditional independences to hold for all $i,j$.  Assumption (A6) implies that $\mathbf{Y}\perp\mathbf{C,X}|\mathbf{W,V}$ and $\mathbf{X}\perp\mathbf{C}|\mathbf{W}$.
Under assumption (A6) the probability
distribution of the observed data may be factorized as 
\begin{equation}
P\left(\mathbf{O}=\mathbf{o}\right)=p_{C}\left(\mathbf{c}\right)g(\mathbf{x}|\mathbf{w})p_{Y}(\mathbf{y}|\mathbf{v,w}),\label{eq:factorization}
\end{equation}
suggesting that $\mathcal{M}$ requires three components: a model
for $p_{C}$, a model for $g$, and a model for $p_{Y}$. Furthermore, the identification results in (\ref{eq:identification form 2}) and (\ref{eq:identification form 1}) indicate that the identifying functional $\psi_n$ depends
on $p_Y$ only through $m$, and under assumption (A6) it depends on $g$ only through $h$. The empirical distribution $\hat{p}_{C}$
can be used throughout to nonparametrically average with respect to $p_{C}$, but,
when $\mathbf{C}$ is high-dimensional, $h$ and $m$ may not be nonparametrically
estimable at rates of convergence that are fast enough to satisfy
the regularity conditions of Theorem 1 (see Appendix). Therefore,
we will specify a statistical
model $\mathcal{M}=\mathcal{M}_{h}\times\mathcal{M}_{m}$, where $\mathcal{M}_{h}$
is a collection of conditional distributions for $\mathbf{V}$ given $\mathbf{W}$ such
that the true conditional distribution is a member, and $\mathcal{M}_{m}$
is a collection of conditional expectations
of $\mathbf{Y}$ given $\mathbf{V}$ and $\mathbf{W}$ such that the true conditional expectation of $\mathbf{Y}$ is a member. Because $P\left(\mathbf{C}=\mathbf{c}\right)$, $g(\mathbf{x}|\mathbf{w})$, and $p_{Y}(\mathbf{y}|\mathbf{v,w})$ are all variation independent, $\mathcal{M}_{h}$ does not restrict $P\left(\mathbf{C}=\mathbf{c}\right)$ or $p_{Y}(\mathbf{y}|\mathbf{v,w})$ and $\mathcal{M}_{m}$ does not restrict $P\left(\mathbf{C}=\mathbf{c}\right)$ or $g(\mathbf{x}|\mathbf{w})$. In principle we do not need to be able to estimate $h$ and $m$ at parametric rates in order for our estimator to achieve asymptotic normality at parametric rates (see Appendix for specific rate conditions). Nonparametric estimation procedures for dependent data are an active area of research \citep{bibaut2021sequential} but may still be difficult to implement under latent variable dependence, and in our data analysis and simulations we rely on parametric models for $h$ and $m$.

Under assumptions (A4)-(A7) an influence function for $\psi_n$, evaluated at a fixed value $\mathbf{o}$ of $\mathbf{O}$, was derived by \citet{vanderlaan2012} and its sample average is given by
\begin{equation}
D_n(\mathbf{o})=  \dfrac{1}{n}\sum_{i=1}^{n}\left(E\left[m\left(V_{i}^{*},W_i\right)\mid\mathbf{C}=\mathbf{c}\right]-\psi_n+\frac{\bar{h}^*(v_{i},w_i)}{\bar{h}(v_{i},w_i)}\left\{ y_{i}-m\left(v_{i},w_i\right)\right\} \right)\label{eq:EIF-1}
\end{equation}
where $\bar{h}(v_{i},w_i)=\frac{1}{n}\sum_{j=1}^{n}h_{j}(v_{i},w_i)$, $\bar{h}^*(v_{i},w_i)=\frac{1}{n}\sum_{j=1}^{n}h_j^*(v_{i},w_i)$, and
$v_{i}=s_{X,i}(\mathbf{x})$. For a deterministic intervention, $V_{i}^{*}=s_{X,i}(\mathbf{x}^{*})$ is not random; we discuss the case of random $V_i^*$ in Section \ref{sec:Extensions}.
$D_n(\mathbf{o})$ has expected value equal to $0$ at the true
$\psi_n$; this fact can be used to generate unbiased estimating equations
for $\psi_n$. \citet{vanderlaan2012} showed that estimators with this influence
function are doubly robust: the right hand side of Equation (\ref{eq:EIF-1})
has expected value equal to $0$ if $m(\cdot)$ is replaced with an
arbitrary functional of $V$ or if $h(\cdot)$ is replaced with an
arbitrary functional of $W$, as long as one of the two remains correctly
specified. 
This implies that
an estimating equation based on Equation (\ref{eq:EIF-1}) will be unbiased
for $\psi_n$ if either model $\mathcal{M}_{m}$ for $m(\cdot)$ or
model $\mathcal{M}_{h}$ for $h(\cdot)$ is correctly specified, i.e.\
contains the truth, even if one is not. 
See Section 6 of \cite{vanderlaan2012} for the derivation of the influence function and the proof of double robustness. 

Although \cite{vanderlaan2012} derived this influence function under a model defined by assumption (A4) (in addition to (A6) and (A7), which we assume throughout), the same derivation holds under  
assumptions (A2) and (A3). This is because, as we argued above, the same functional $\psi_n$ identifies $E[\bar{Y}_{i}^{*}]$ under either set of assumptions, and furthermore any estimator that is (asymptotically) unbiased under (A4) and (A5) will remain so when these are relaxed to (A2) and (A3), implying that any influence function under the model defined by (A4) and (A5) is also an influence function under (A2) and (A3). In Theorem 1 we will prove that the resulting estimator is CAN under  assumptions (A2) and (A3).

Below we propose a targeted maximum loss-based estimator (TMLE) of $\psi_n$. All of the results that follow are equally applicable to a standard estimating equation approach in which estimating equations for the parameters indexing a model for $m$ and a model for $h$ are stacked with the influence function estimating equation for $\psi_n$ (see, e.g., \citealp{kennedy2016semiparametric}). More details about implementation can be found in companion papers focused on implementation and computation \citep{sofryginTechreport,sofrygin2017conducting,sofrygin2018single} and an R package is available \citep{sofrygin2015tmlenet}.

TMLE is a general template for estimation
of smooth parameters in semi- and nonparametric models. The estimation
algorithm is constructed to solve an influence function
estimating equation (hence the asymptotic equivalence with the estimating equation approach). 
In our setting, a TMLE is constructed using three elements: (i) a valid loss
function $L$ for the outcome regression model $m$, (ii) initial
working estimators $\hat{m}$ of $m$ and and $\hat{h}$ of $h$,
and (iii) a parametric submodel $m_{\epsilon}$ of $\mathcal{M}$,
the score of which corresponds to a particular component of the score
based on the influence function $D_n(\mathbf{o})$ and such
that $m_{\epsilon =0}=m(\cdot)$. The TMLE is then defined by an iterative procedure
that, at each step, estimates $\epsilon$ by minimizing the empirical
risk of the loss function $L$ at $m_{\epsilon}$. An updated estimate
is then computed as $\hat{m}_{\hat{\epsilon}}$, and the process is
repeated until convergence. The TMLE is the estimator obtained in
the final step of the iteration. The result of the previous iterative
procedure is that, at the final step, the influence function
estimating equation is solved. For more details about targeted maximum
likelihood estimation, see \citet{van2011targeted}. In the present setting,
the TMLE for $\psi_n$ based on $D_n(\mathbf{o})$ requires only one
iteration for convergence \citep{van2011targeted}. 
Initial parametric estimators $\hat{m}$ and $\hat{h}$ of $m$ and $h$ may be found
through maximum likelihood or loss-based estimation methods like standard regression models. (The proof of Theorem 1 also suffices to prove that an m-estimator for either of the nuisance models will be CAN for its expectation.)  Alternatively, under a conditional independence structure analogous to that implied by assumptions (A1), (A4), and (A5), \citet{benkeser2018online} showed that super learning \citep{van2007super} can be used to nonparametrically estimate the nuisance models. The empirical distribution $\hat{p}_{C}$ is
used to marginalize with respect to $p_{C}$. 

We propose using a direct estimate $\hat{\bar{h}}$ of $\bar{h}$ that optimizes the log likelihood function $\sum_{i=1}^{n}\log\bar{h}(V_{i}|W_{i})$
as if the pooled sample $(V_{i},W_{i})$ were i.i.d.\ It can be shown
that this results in a valid loss function for $\bar{h}$, even for
dependent observations $(V_{i},W_{i})$ \citep{vanderlaan2012,sofryginTechreport}. Similarly,
one can construct a direct estimator $\hat{\bar{h}}_{x^{*}}$ of $\bar{h}_{x^{*}}$ by first
creating a sample $(V_{i}^{*},W_{i})$ and then directly optimizing
the log likelihood function $\sum_{i=1}^{n}\log\bar{h}_{x^{*}}(V_{i}^{*}|W_{i})$,
as if the pooled sample $(V_{i}^{*},W_{i})$ were i.i.d. We perform estimation of the conditional mixture density $\bar{h}$ using a conditional histogram approach, previously described for i.i.d.\ data in \citet{munoz2011super}. The approach relies on fitting the conditional hazards of individual bins from the support of $V_{i}$ (given $W_{i}$) using separate parametric logistic regression models. 
In our highly-dependent network settings, the operational characteristics of the direct estimator of $\bar{h}$ are unclear. 
However, we believe that the enormous computational advantages offered by this direct estimation route, along with the encouraging results obtained from our extensive simulations, merit the description of this estimator. We also realize that more theoretical work is needed to justify and improve upon this direct approach. For additional details and simulation results that demonstrate the performance of the direct estimation approach for mixture density $\bar{h}$, we refer to \citet{sofryginTechreport,sofrygin2017conducting,sofrygin2018single}.

Now the TMLE of $\psi_n$ is computed
as follows: 
\begin{enumerate}
\item Define the auxiliary weights $H_{i}$ as the ratio of estimated densities
of $V^{*},W$ and $V,W$ evaluated at the observed value $W_{i}$. Compute
the auxiliary weights as 
\[
H_{i}=\frac{\hat{\bar{h}}^*(V^*_{i},W_i)}{\hat{\bar{h}}(V_{i},W_i)}.
\]

\item Compute initial predicted outcome values $\hat{Y}_{i}\equiv\hat{m}(V_{i},W_i)$
and predicted potential outcome values $\hat{Y}_{i}^{*}\equiv\hat{m}(V_{i}^{*},W_i)$
evaluated at the counterfactual value $V_{i}^{*}$. Under a static intervention $V_i^*$ is the degenerate random variable $s_{X,i}(\mathbf{x}^{*})$. 
\item Construct a TMLE model update $\hat{m}_{\hat{\epsilon}}$ of $\hat{m}$
by running a weighted intercept-only logistic regression model with
weights $H_{i}$ defined in step (1), $Y_{i}$ as the outcome and
including $\hat{Y}_{i}$ as an offset. That is, define $\hat{\epsilon}$
as the estimate of the intercept parameter $\epsilon$ from the following
\emph{weighted} logistic regression model 
\[
\mbox{logit}\hat{m}_{\epsilon}(v,w)=\mbox{logit}\hat{m}(v,w)+\epsilon,
\]
where $\mbox{logit}(x)=\log\left(\frac{x}{1-x}\right)$. 
\item Compute updated predicted potential outcomes $\tilde{Y_{i}}^{*}$
as the fitted values of the regression from step (c), evaluated at $v^{*}$
rather than $v$ (that is, at $\hat{Y}_{i}^{*}$ instead of $\hat{Y}_{i}$):
\[
\tilde{Y_{i}}^{*}=\mbox{expit}\{\mbox{logit}\hat{Y}_{i}^{*}+\hat{\epsilon}\},
\]
where $\mbox{expit}(x)=\frac{1}{1+e^{-x}}$, i.e.\, the inverse of
the $\mbox{logit}$ function. 
\item Compute the TMLE $\hat{\psi_n}$ as 
\[
\hat{\psi_n}=\frac{1}{n}\sum_{i=1}^{n}\tilde{Y_{i}}^{*}.
\]

\end{enumerate}
The TMLE is doubly robust: it will be consistent for $\psi_n$
if either the working model $\hat{h}$ for $h$ (in this case comprised of models for $\bar{h}$ and $\bar{h}_{x^*}$) or the working model
$\hat{m}$ for $m$ is correctly specified. This resulting estimator
remains CAN for $\psi_n$ under assumptions (A2) and (A3) or
(A4) and (A5), and the same procedure can be used to estimate the
parameter conditional on $\mathbf{C}$.

\subsection{Asymptotic normality} \label{sub:Asymptotic-normality}

We consider an asymptotic regime in which $K_{i}$ may
grow as $n\rightarrow\infty$ and prove that $\hat{\psi}_n$ converges to a normal limiting distribution under assumptions (A2) and (A3), which implies the same result under Assumptions (A4) and (A5). The proof relies on an analysis of the influence function of $\hat{\psi}_n$ and is therefore agnostic to estimating procedure (i.e.\ it holds for the TMLE and estimating equation approaches). However, it requires that one of the models for $m$ and $h$ be correctly specified and converge to the truth at rate $\sqrt{C_n}$ or that both models converge to the the truth such that the product of their rates of convergence is $\sqrt{C_n}$, e.g. both converge to the truth at rate $C_n^{-1/4}$ (see the regularity conditions in the Appendix).

\textbf{Theorem 1:\label{Theorem 1} }\textit{Suppose that $K_{max,n}^{2}/n\rightarrow0$
as $n\rightarrow\infty$, where $K_{max,n}=max_{i}\{K_{i}\}$
for network size $n$. Under assumptions (A2), (A3), (A6), (A7), and regularity conditions
(see Appendix), 
\begin{align*}
 & \sqrt{C_{n}}\left(\hat{\psi_n}-\psi_n\right)\overset{d}{\longrightarrow}N(0,\sigma^{2}),
\end{align*}
for some finite $\sigma^{2}$ and for some $C_n$ such that $n/K_{max,n}^{2}\leq C_{n}\leq n$.}

The asymptotic variance $\sigma^{2}$ of $\hat{\psi_n}$ is given by the asymptotic variance of $D_n(\mathbf{O})$, the sample average of the influence function of the estimator. 
The proof of Theorem 1 is in the Appendix. Broadly, the proof has two parts:
first, to show that the second order terms in the expansion of $\hat{\psi_n}-\psi_n$
are stochastically less than $1/\sqrt{C_{n}}$, and second, to show
that the first order terms converge to a normal distribution when
scaled by a factor of order $\sqrt{C_{n}}$. The proof that the second
order terms are stochastically less than $1/\sqrt{C_{n}}$ is an extension
of the empirical process theory of \citet{van1996weak} and follows
from the proof in \citet{vanderlaan2012}. For
the proof that the first order terms converge to a normal distribution,
we rely on Stein's method of central limit theorem proof \citep{stein1972bound}.
Stein's method allows us to derive a bound on the distance between
our first order term (properly scaled) and a standard normal distribution;
this bound depends on the degree distribution $K_{1},...,K_{n}$.
We show that this bound converges to $0$ as $n\rightarrow\infty$
under regularity conditions and our running assumption that $K_{max,n}^{2}=o(n)$.

When all nodes have the same number of ties, i.e.\ $K_{i}=K_{max,n}$
for all $i$, then the rate of convergence will be given by $\sqrt{C_{n}}=\sqrt{n/K_{max,n}^{2}}$.
When $K_{max,n}$ is bounded above as \textit{$n\rightarrow\infty$}, as in \citet{vanderlaan2012},
 the rate of convergence will be $\sqrt{n}$. When $K_{max,n}\rightarrow\infty$
but some nodes have fewer than $K_{max,n}$ ties, the exact rate of
convergence is between $\sqrt{n/K_{max,n}^{2}}$ and $\sqrt{n}$ 
but is difficult or impossible to determine analytically, as it may
depend intricately on the structure of the network. The inferential
procedures that we describe below do not require knowledge of the
rate of convergence.

In Section \ref{sec:hubs},
below, we discuss settings in which the conditions for this theorem
fail to hold, and ways to recover valid inference for conditional
estimands in some of these settings. 

\subsection{Inference}
\label{sec:inference}

An asymptotically valid 95\% confidence interval for $\psi_n$ is given by $\hat{\psi_n}\pm1.96\sigma/\sqrt{C_{n}}$.
In practice neither $\sigma$ nor $C_{n}$ are likely to be known,
but available variance estimation methods estimate the variance of
$\hat{\psi_n}$ directly, incorporating the rate of convergence
without requiring it to be known a priori. 

In principle, an estimate of the variance of the $\hat{\psi_n}$ can always be obtained by the plug-in estimator of the variance of the influence function, which depends on the observed data only through $m(\cdot)$, $g(\cdot)$, and $p_C(\mathbf{c})$ (see Appendix). When dependence is due to direct transmission, that is under assumptions (A1), (A4), and (A5), $C$ is i.i.d.\ and  $p_C(\mathbf{c})$ can be estimated with the empirical distribution of $C$. We prove in the Appendix that the plug-in estimator using the empirical average of the square of the
influence function, substituting $\hat{\psi_n}$ for $\psi_n$ and the
fitted values from the working models $\hat{h}$ and $\hat{m}$ for
$h$ and $m$, is consistent under correct specification of models for both $m$ and $h$. 
Although the estimator $\hat{\psi_n}$ is doubly robust this variance estimator is not and may be anticonservative if one,
but not both, of the models for $m$ and $h$ is correctly
specified. Using flexible or non-parametric specifications for these
models increases opportunities to estimate both consistently; this is feasible using i.i.d.\ methods when dependence is due to direct transmission because the dependent variables in each of the two nuisance models are conditionally i.i.d. 
For a detailed discussion of how to implement this variance estimator, see \citet{sofryginTechreport}.

An alternative approach to estimate the variance of $\hat{\psi}_n$
under assumptions (A1), (A4), and (A5) is to employ the following
version of a parametric bootstrap, which might offer improvements
in finite-sample performance over the previously described approach.  
For each
of $B$ bootstrap iterations, indexed by $b=1,\ldots,B$, first $n$
covariates $\mathbf{C}^{b}=(C_{1}^{b},\ldots,C_{n}^{b})$ are sampled
with replacement, then a 
model fit $\hat{g}$ is applied to sampling of $n$ exposures $\mathbf{X}^{b}=(X_{1}^{b},\ldots,X_{n}^{b})$,
followed by a sample of $n$ outcomes $\mathbf{Y}^{b}=(Y_{1}^{b},\ldots,Y_{n}^{b})$
based on the existing outcome model fit $\hat{m}$. The corresponding bootstrap summaries $W_{i}^{b}$
and $V_{i}^{b}$, for $i=1,\ldots,n$, are constructed by applying
the summary functions $s_{C}$ and $s_{X}$ to $\mathbf{C}^{b}$ and
$\mathbf{X}^{b}$, respectively. This bootstrap sample
is then used to obtain the predicted values from the existing auxiliary
covariate fit $(\hat{\bar{h}}_{x^{*}}/\hat{\bar{h}})(V_{i}^{b},W_i^b)$,
for $i=1,\ldots,n$, followed by a bootstrap-based fitting of $\epsilon$,
and finally, evaluation of bootstrap TMLE. Note that the TMLE model
update is the only model fitting step needed at each iteration of
the bootstrap, which significantly lowers the computational burden
of this procedure. The variance estimate is then obtained by taking
the empirical variance of bootstrap TMLE samples $\hat{\psi_n}^{b}$. Because the parametric bootstrap relies on known or assumed independences, and because only the TMLE model (i.e.\ not the full likelihood) is fit at each iteration, this procedure consistently estimates the variance of the first order terms in the expansion of $\hat{\psi_n}-\psi_n$, and we prove in the Appendix that the higher order terms are asymptotically neglible.
However, due to dependence across observations, one
must be judicious with applications of the bootstrap. For example,
the parametric bootstrap procedure described above requires conditional
independence of $X_{i}$ given $W_{i}$ and $Y_{i}$ given $(V_{i},W_i)$,
along with the consistent modeling of the corresponding factors of
the likelihood. It may seem natural to sample $V_{i}$ directly from
its corresponding auxiliary model fit, but this is likely to result
in an anti-conservative variance estimates, since the conditional
independence structure assumed for $\mathbf{V}$ is unlikely to hold by virtue of its construction
as a summary measure of the network. 

When latent variable dependence is present, that is under assumptions
(A1) through (A3), the empirical distribution may not consistently estimate functionals of $p_C(\mathbf{c})$ at rate $\sqrt{C_n}$. This is a problem for both methods of variance estimation; we describe three possible strategies for overcoming this challenge but acknowledge future research is needed to devise better solutions.  The first option is to use a version of block bootstrap to estimate the required functionals of $p_C(\mathbf{c})$. Bootstrap methods for network data are an area of active research and beyond the scope of this paper 
but we note that, under our dependence assumptions, $\{\mathbf{C}_I\} \perp \{\mathbf{C}_J\}$ for sets of indices $I$ and $J$ such that no node in $I$ is connected to any node in $J$ by a path of fewer than 3 ties. This is an m-dependence structure and well-established methods for block-bootstrap for m-dependent data are immediately applicable \citep{lahiri2003resampling}. However, unlike standard m-dependence settings where the underlying topology is Euclidean, it could be computationally challenging to identify independent blocks in network data. We leave implementation of this procedure for future research. A second option is to postulate a parametric model for the required functionals of $p_C(\mathbf{c})$. 
Correctly specifying the joint distribution of $\mathbf{C}$ may be challenging in many settings. Finally, the third option is to restrict attention to conditional estimands, for which variance estimation does not require estimating functionals of $p_C(\mathbf{c})$.  A simple plug-in estimator is available for the variance of the conditional influence function $D_{n}^C(\mathbf{o})$ (see the Appendix and \citealp{vanderlaan2012}) and this is the approach that we take in our simulations and data analysis. However, our theoretic results may require additional assumptions when $\mathbf{C}$ is continuous.

\section{Extensions} \label{sec:Extensions}

In this section we extend the estimation procedure to two causal effects
of great interest in the context of social networks: social contagion,
or peer effects, and interventions
on the network structure itself, i.e.\ interventions on $\mathbf{A}=\left[A_{ij}:i,j\in\{1,\ldots,n\}\right]$
where, as above, $A_{ij}\equiv I\left\{ \mbox{subjects }i\mbox{ and }j\mbox{ share a tie}\right\} $.

\subsection{Dynamic and stochastic interventions}

A dynamic intervention assigns exposures as a user-specified, deterministic
function of covariates. We operationalize this as the replacement
of $h(v|w)$ with a new, user-specified, function $h^{*}$ that depends
on $w$ but is nonrandom. A stochastic intervention assigns exposures
as a user-specified, random function. We operationalize this as the
replacement of $h(v|w)$ with a new, user-specified, random distribution
$h^{*}$ that may depend on $w$. In contrast, a static intervention
replaces $h$ with a constant function.

It is sometimes more natural to think about dynamic and stochastic
interventions in terms of an intervention SEM that modifies $f_{X}$,
replacing it with a user-specified function of $W$. As long as the
intervention SEM adheres to Assumption (\ref{summary
functions}), an intervention on $f_{X}$ induces an intervention
distribution $h^{*}$ -- this induced distribution is required for
our estimation results to hold. Alternatively, we can imagine intervening on the functional form
of $s_{X}$. In particular, we can define two different summary functions:
$s_{C,X}^{*}(\cdot)$, a user-specified functional describing the
hypothetical dependence of $X$ on $\mathbf{C}$, and $s_{X}^{*}(\cdot,\cdot)$,
a user-specified functional describing the hypothetical dependence
of $Y$ on $\mathbf{X}$. They are denoted by an asterisk because
they index hypothetical interventions rather than realized data-generating
mechanisms. Let $W_{X,i}^{*}=s_{C,X,i}^{*}(\mathbf{C})$ and $V_{i}^{*}=s_{X,i}^{*}(\mathbf{X}^{*})$.
Then the intervention SEM for this stochastic intervention is given
by 
\begin{align}
C_{i} & =f_{C}\left[\varepsilon_{C_{i}}\right] & i=1,\ldots,n\nonumber \\
X_{i}^{*} & =f_{X}\left[W_{X,i}^{*},\varepsilon_{X_{i}}\right] & i=1,\ldots,n\nonumber \\
Y_{i}^{*} & =f_{Y}\left[W_{i},V_{i}^{*},\varepsilon_{Y_{i}}\right] & i=1,\ldots,n.\label{eq: intervene on sX, sY-1}
\end{align}
This can be interpreted as an intervention where, for each $x^{*}$
in the support of $X$ and for $i=1,...,n$, $X_{i}$ is set to $x^{*}$
with probability $P\left[X=x^{*}|W=s_{C,X,i}^{*}(\mathbf{C})\right]$
and $V_{i}$ is set to $s_{X,i}^{*}(\mathbf{x}^{*})$ deterministically
for each possible realization \textbf{$\mathbf{x}^{*}$}. Because
$Y$ depends on $\mathbf{X}$ only through $V$, this is equivalent
to an intervention that sets $V_{i}$ to $v$ with probability $P\left[\mathbf{X}\in\left\{ \mathbf{x}^{*}:s_{X,i}^{*}(\mathbf{x}^{*})=v\right\} \mid\mathbf{W}=\mathbf{s}_{C,X}^{*}(\mathbf{C})\right]$,
where $\mathbf{s}_{C,X}^{*}(\mathbf{C})=\left(s_{C,X,1}^{*}(\mathbf{C}),...,s_{C,X,n}^{*}(\mathbf{C})\right)$.
That is, it is equivalent to an intervention on $h$ with the intervention
distribution given by $h_{i}^{*}(v|w)=P\left(\left\{ \mathbf{x}:s_{X,i}^{*}(\mathbf{x})=v\right\} \mid\mathbf{s}_{C,X,i}^{*}(\mathbf{c})\right)$.

Potential outcomes under dynamic and stochastic interventions are
identified under the same no unmeasured confounding and positivity assumptions
as deterministic interventions. The conditional support of $V^{*}$
must be included in the conditional support of $V$ in order for the
intervention to be supported by the data and the positivity assumption
to hold. For dynamic interventions, identification follows (2) and
(3); the only difference is that $x_{i}^{*}$ is determined by $w_i$
rather than being specified directly. Under a stochastic intervention,
$X_{i}^{*}$ and $V_{i}^{*}$ are random and potential outcomes are
identified as follows:
\begin{align*}
E\left[\bar{Y}_{n}^{*}\right] & =\frac{1}{n}\sum_{i=1}^{n}E[Y_{i}^{*}]\\
 & =\frac{1}{n}\sum_{i=1}^{n}\sum_{\mathbf{c}}E[Y|V_{i}^{*},W_{i}=s_{C,i}(\mathbf{c})]p_{C}(\mathbf{c})\\
 & =\frac{1}{n}\sum_{i=1}^{n}\sum_{w,v}E[Y_{i}|V_{i}=v,W_{i}=w]h_{i}^{*}(v,w)\\
 & =\frac{1}{n}\sum_{i=1}^{n}\sum_{w,v}m(v,w)h_{i}^{*}(v,w).
\end{align*}
This generalizes from Equation (3) which reflects the fact that, for a static intervention, $h_i^*(v,w)$ only puts mass on the single value $v=v_i^*$.
Letting $\psi_{n}$ denote this observed data functional, a sample
average influence function is given by $D_{n}(\mathbf{o})$ in equation
(5). The only piece of this IF that depends on the nature of the intervention
is the form of $h^{*}$.
The TMLE of $\psi_{n}$ is computed according to the steps outlined
in Section \ref{sec:methods};
the fact that $\mathbf{X}^{*}$ and $V^{*}$ may be random does not
affect the estimation algorithm. (Additional details and examples
can be found in \citealp{sofryginTechreport}.) Theorem
1 is agnostic about whether or not $h^{*}$ is degenerate; therefore
the same asymptotic results hold for the estimation of potential outcomes
under static, dynamic, and stochastic interventions.

\subsection{Peer effects}

Define $Y_{i}^{0}$ to be the outcome variable measured at a time
previous to the primary outcome measurement $Y_{j}$. Peer effects
are the class of causal effects of $Y_{j}^{0}$ on $Y_{i}$ for $A_{ij}=1$:
the effects of alters' outcomes on the subsequent outcome of
an ego. If we let $X_j=Y_{i}^{0}$, we can use the framework above to define, identify, and estimate static, dynamic, and stochastic peer effects. For example,  

In order to maintain the identifying assumptions A2b and A3b, the
time elapsed between $Y^{0}$ and $Y$ must permit transmission only
between nodes and their immediate alters. Otherwise, if the outcome
could have spread contagiously more broadly, there will be more dependence
present than our methods can account for, and also possible confounding
of the effect of $Y_{i}^{0}$ on $Y_{j}$ for $A_{ij}=1$ due to mutual
connections.

\subsection{Interventions on network structure}

As a special case of interventions on $s_{X}(\cdot)$ or on $s_{C,X}(\cdot)$
we consider interventions determined by changes to the network itself,
i.e. interventions that add, remove, or relocate ties in the network. Consider an intervention that modifies $s_{X}$ by replacing the observed
adjacency matrix $\mathbf{A}$ with a user-specified adjacency matrix
$\mathbf{A}^{*}$. This intervention replaces $s_{X,i}(\mathbf{X})$
with $s_{X,i}^{\mathbf{A^{*}}}(\mathbf{X})\equiv s_{X,i}\left(\left\{ X_{j}:A_{ij}^{*}=1\right\} \right)$.
The intervention SEM differs from the data-generating SEM only in
that $Y_{i}$ depends on the counterfactual treatments for the individuals
with whom $i$ shares ties in the intervention adjacency matrix $\mathbf{A}^{*}$.
Similarly, an intervention on $s_{C,X}(\cdot)$ would result in a dynamic intervention where $X_{i}^{*}$
depending on the covariate values for the individuals with whom $i$
shares ties in the intervention adjacency matrix $\mathbf{A}^{*}$.

Interventions on summary features of the adjacency matrix can 
be operationalized as stochastic interventions. Instead of replacing $\mathbf{A}$
with a user-specified $\mathbf{A}^{*}$, an intervention on features of the network
structure might replace $\mathbf{A}$ with a random draw from a class $\mathcal{A}^{*}$
of $n\times n$ adjacency matrices that share the intervention features,
stochastically according to some probability distribution $p_{\mathbf{A}^{*}}$
over $\mathcal{A}^{*}$. For example, we might be interested in an intervention on $s_X$ 
that constrains the degree distribution of the network, e.g. fixing
the maximum degree to be smaller than some $D$. We might specify
$p_{\mathbf{A}^{*}}(A)=\frac{1}{\left|\mathcal{A}^{*}\right|}I\left\{ A\in\mathcal{A}^{*}\right\} $,
giving equal weight to each realization in the class $\mathcal{A}^{*}$.
This kind of intervention sets $V_{i}$ to $v$ with
probability 
$P\left[\mathbf{X}\in\left\{ \mathbf{x}:s_{X,i}^{\mathbf{A}^{*}}(\mathbf{x})=v\right\} \right]$.
Note that defining a feasible intervention with respect to changes
to the adjacency matrix is possible due to assumption (A6). This
is a strong assumption; if network structure can affect $\mathbf{Y}$
via mechanisms of interest that do not operate through $s_{X}(\cdot)$
then estimating these effects is more challenging \citep{ogburn2014causal,toulis2018propensity}.

As with the stochastic interventions discussed in the previous section,
positivity is a crucial assumption for identifying interventions on
$\mathbf{A}$: the support of $V^{*}$ must be the same as the support
of $V$. If replacing $\mathbf{A}$ with $\mathbf{A}^{*}$ 
assigns to unit $i$ a value of $V$ that not observed in the real
data for a unit in the same $W$ stratum as $i$, then the
effect of the intervention that replaces $\mathbf{A}$ with $\mathbf{A}^{*}$
is not identified for unit $i$. In general it may be possible to
identify interventions on local but not global features of network
structure. Examples of local features of network structure include
the degree of subject $i$ and local clustering around subject $i$:
they depend on $\mathbf{A}$ only through subject $i$ and subject
$i$'s immediate contacts. A local clustering coefficient for node
$i$ can be defined as the proportion of potential triangles that
include $i$ as one vertex and that are completed, or the number of
pairs of neighbors of $i$ who are connected divided by the total
number of pairs of neighbors of $i$ \cite{newman2009networks}. This
measure of triangle completion captures the extent to which ``the
friend of my friend is also my friend'': triangle completion is high
whenever two subjects who share a mutual contact are more likely to
themselves share a tie than are two subjects chosen at random from
the network. Positivity could hold if, within each level of $W$,
subjects were observed to have a wide range of degrees and of triangle
completion among their contacts. In contrast with degree and local
clustering, network centrality is a node-specific attribute that nevertheless
depends on the entire network structure. It captures the intuitive
notion that some nodes are central and some nodes are fringe in any
given network. It can be measured in many different ways, based, for example, on the number of network paths that intersect node $i$, on the probability that a random walk on the network will intersect node $i$, or on the mean distance between node $i$ and the other nodes in network (see Chapter 7 of \cite{newman2009networks} for
a comprehensive discussion of these and other centrality measures).
Centrality is given by a univariate measure for each node in a network,
but each node's measure depends crucially on the entire graph. In
reality it is not generally possible to intervene on centrality without
altering the entire adjacency matrix $\mathbf{A}$, and the positivity
assumption is unlikely to hold.


\subsection{Too many friends, too much influence} \label{sec:hubs}

The conditions of Theorem 1 will be
violated for any asymptotic regime in which the degree of one
or more nodes grows at a rate equal to or faster than $\sqrt n$. This is problematic because social networks frequently have
a small number of ``hubs''--that is, nodes with very high degree
\citep{newman2009networks}, 
When a small number of individuals wield influence
over a significant portion of the rest of the population, two problems
arise for statistical inference. First, the number of hubs may stay small as $n$ increases. If the hubs are systematically different from the rest of the population,
then a fixed or slowly growing number of hubs would not allow for
consistent inference about this distinct subpopulation. Second, and
more importantly, the sweeping influence of hubs creates dependence
among all of the influenced nodes that undermines inference. Our methods
rely on the independence of $Y_{i}$ and $Y_{j}$ whenever nodes $i$
and $j$ do not share a tie or a mutual alter. When hubs are present,
a significant proportion of nodes will share a connection to one of
these hubs, undermining our methods.

We can recover valid inference using our methods if we condition on
the hubs, treating them as features of the background network environment
rather than as observations. This results in different causal effects
or statistical estimands, as all of our inference is conditional on
the identity and characteristics of the hubs. Imagine a social network
comprised of the residents of a city in which a cultural or political
leader is connected to almost all of the other nodes. It may be impossible
to disentangle the influence of this leader, which affects every other
node, from other processes simultaneously occurring among the other
residents of the city. It will certainly be impossible to statistically
learn about the hub, as the sample size for the hub subgroup is 1.
But it may make sense to consider the hub as a feature of the city
rather than a member of the network. We could then learn about other
processes occurring among the other residents of the city, conditional
on the behavior and characteristics of the leader. For example, we
could evaluate the effect of a public health initiative encouraging
residents to talk to their friends about the importance of exercise,
but we could not evaluate a similar program targeting the leader's
communication about exercise.  

Practically speaking, this implies that
the methods we have proposed are inappropriate for networks in which
the degree is large, compared to $n$, for one or more nodes. If many
nodes are connected to a significant fraction of other nodes, this
problem is intractable. However, if only a small number of nodes are
highly connected we can condition on them to recover approximately
valid inference using our methods for conditional estimands. There
is a theoretical tradeoff between the rate of convergence of our estimators
and the order of $K$ relative to $n$ that, in finite samples, becomes
a practical tradeoff between generality and variance. Increasing the
number of nodes classified as hubs will increase the rate of convergence by decreasing the
size of $K$ for the remaining, non-hub nodes (assuming that the
number of hubs remains small compared to $n$ so that the sample size
does not decrease significantly when we exclude hubs from the analysis).
On the other hand, classifying more nodes as hubs results in analyses
that are increasingly specific: conditioning on a single hub may preserve
generalizability to other networks (similar cities with similar leaders),
but conditioning on many hubs is likely to limit the generalizability
of the resulting inference.

\section{Simulations}\label{sec:Simulations}

We conducted a simulation study that evaluated the finite sample and
asymptotic behavior of the TMLE procedure described in Section \ref{sub:Estimation}.
We generated social networks of size $n=500$, $n=1,000$, and $n=10,000$ according to the preferential attachment
model \citep{barabasi1999emergence}, where the node degree (number
of alters) distribution followed a power law with $\alpha=0.5$. We generated data with two different types of dependence: first with dependence due to direct
transmission only, and second with both latent variable dependence
and dependence due to direct transmission. Details of the simulations, along with results for networks generated under the small world model \citep{watts1998collective}, are in the Appendix.  

Our simulations mimicked a hypothetical study designed to increase
the level of physical activity in a population comprised of members
of a social network. For each community member indexed by $i=1,\ldots,n$,
the study collected data on $i$'s baseline covariates, denoted $C_{i}$,
which included the indicator of being physically active, denoted $PA_{i}$
and the network of alters--or friends--on each subject, $F_{i}$. The exposure
or treatment, $X_{i}$, was assigned randomly to 25\% of the community.
For example, one can imagine a study where treated individuals received
various economic incentives to attend a local gym. The outcome $Y_{i}$
was a binary indicator of maintaining gym membership for a pre-determined
follow-up period. We 
estimated the average of the mean counterfactual outcomes $E\left[\bar{Y}^{*}_{n}\right]$
under various hypothetical interventions $h^{*}$ on such a community.
First, we considered a stochastic intervention $h_{1}^{*}$ which
assigned each individual to treatment with a constant probability
of $0.35$; this differs from the observed allocation of treatment
to $25\%$ of the community members. We also considered a scenario
in which the economic incentive was resource constrained
and could only be allocated to up to 10\% of community members. We estimated the effects of various targeted approaches to allocating
the exposure. For example, we considered an intervention $h_{2}^{*}$
that targeted only the top 10\% most connected members of the community,
as such a targeted intervention would be expected to have a higher
impact on the overall average probability of maintaining gym membership
among the community, when compared to purely random assignment of
exposure to 10\% of the community. Another hypothetical intervention
$h_{3}^{*}$ assigned an additional physically active friend to individuals
with fewer than 10 friends. This is an intervention on the structure of the social network itself. Finally,
we estimated the combined effect of simultaneously implementing intervention
$h_{2}^{*}$ and the network-based intervention $h_{3}^{*}$ on the
same community. For simplicity, we report the expected outcome under each of these interventions;
causal effects defined as contrasts of these interventions can be
easily estimated using the same methods.

The results from the simulations with dependence due to direct transmission
are shown in the left panel of Figure \ref{fig:CIres.EY.prefattach}.
We estimated the marginal parameter $E\left[\bar{Y}^{*}_{n}\right]$
and compared three different estimators of the asymptotic variance
and the coverage of the corresponding confidence intervals. First,
we looked at the naive plug-in i.i.d.\ estimator (``\emph{i.i.d.\ Var}'')
for the variance of the influence curve which treated observations
as if they were i.i.d. Second, we used the plug-in variance estimator
based on the influence function which adjusted for the correlated
observations (``\emph{dependent IC Var}'') \citep{sofryginTechreport}. Finally, we used the parametric bootstrap
variance estimator (``\emph{bootstrap Var}'') described in Section \ref{sec:inference}. The results from the simulations with latent variable dependence are in the right panel of Figure \ref{fig:CIres.EY.prefattach}. We estimated
the conditional parameter $E\left[\bar{Y}^{*}_{n}\right]$ and compared
two plug in variance estimators based on the conditional influence
function $D_{n}^C$: one that assumes conditionally i.i.d outcomes (conditional
on $\mathbf{X}$ and $\mathbf{C}$), which would be true if all dependence
were due to direct transmission but is violated in the presence of
latent variable dependence (``\emph{i.i.d.\ Var}''), and one that does
not make this assumption (``\emph{dependent IC Var}''). In the Appendix we compare
histograms of the estimates to the predicted normal limiting distribution.

\begin{figure} [h]
\begin{centering}
\includegraphics[scale=0.32]{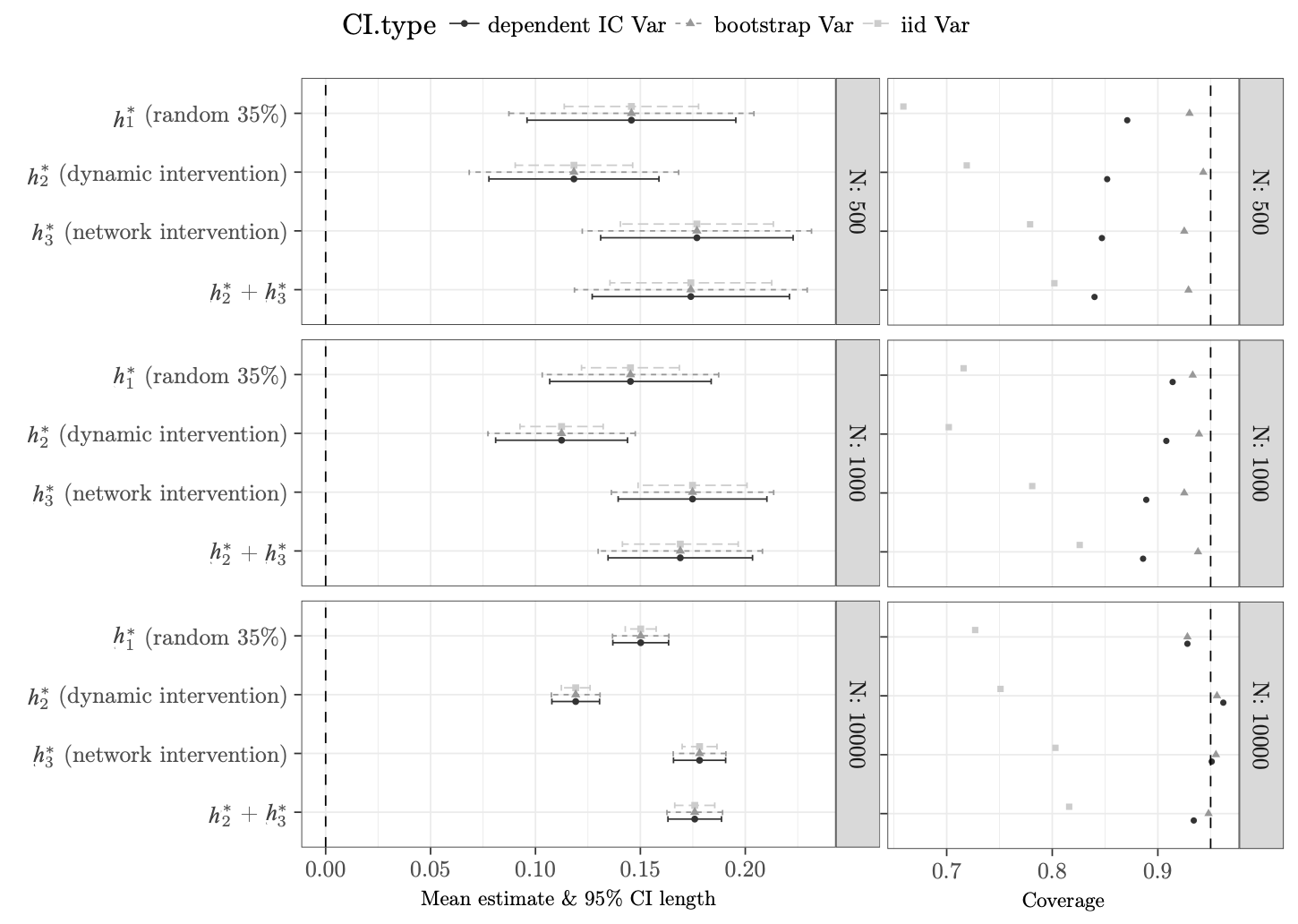} 
\includegraphics[scale=0.32]{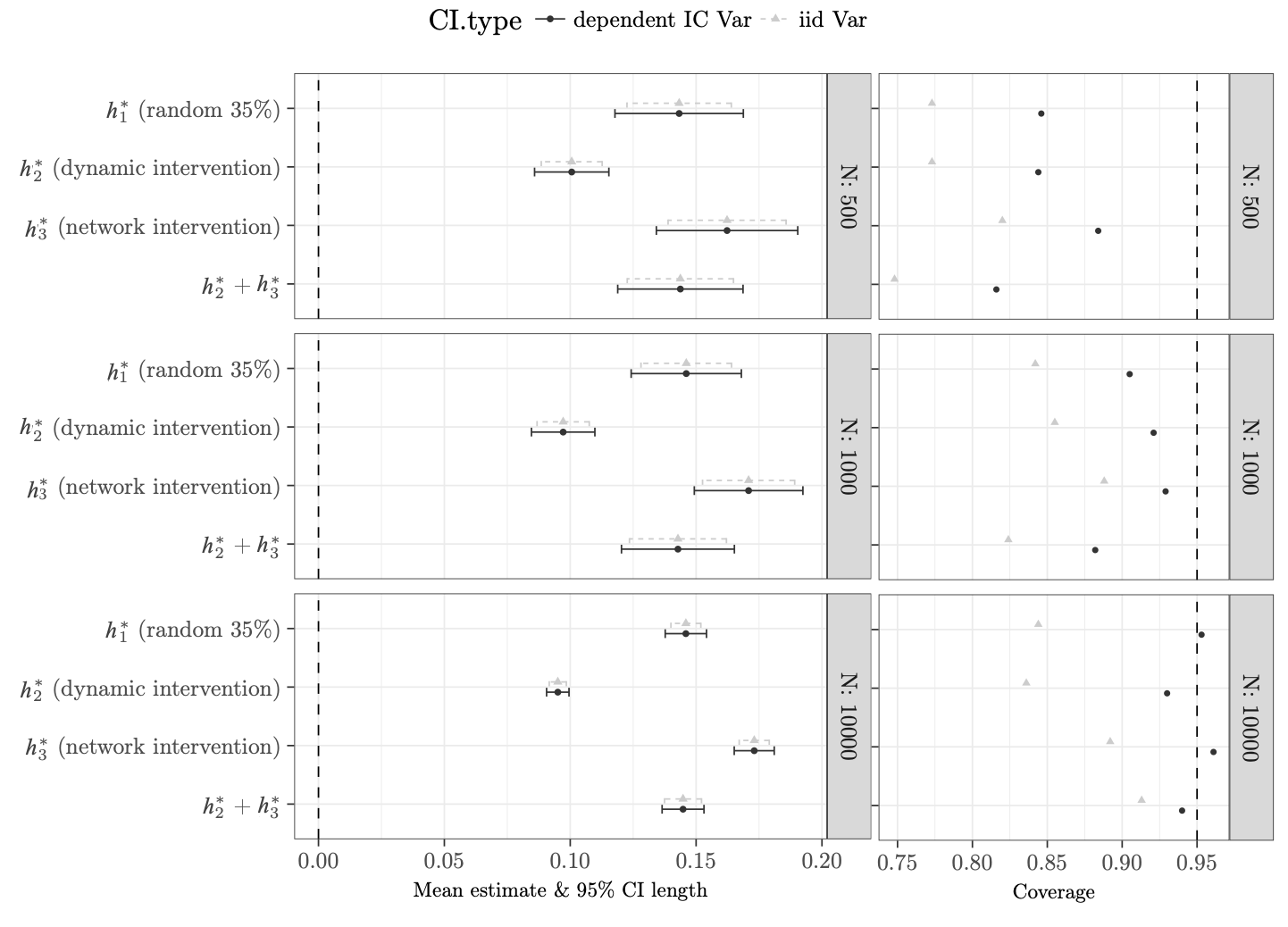} 
\par\end{centering}

\protect\protect\caption{Mean 95\% CI length and coverage for the
TMLE in preferential attachment network with dependence due to direct
transmission (left panel) and with latent variable dependence (right panel), by sample size, intervention and CI type. \label{fig:CIres.EY.prefattach}}
\end{figure}

One of the lessons of our simulation study is that by leveraging the
structure of the network it might be possible to achieve a larger
overall intervention effect on a population level \citep{harling2016leveraging}.
For example, the results in the left panel of Figure \ref{fig:CIres.EY.prefattach} show that by targeting the exposure
assignment to highly connected and physically active individuals,
intervention $h_{2}^{*}$ increases the mean probability of sustaining
gym membership compared to the similar level of un-targeted coverage
of the exposure. We also demonstrated the feasibility of estimating
effects of interventions on the observed network structure itself,
such as intervention $h_{3}^{*}$, which can be also combined with
economic incentives, as it was mimicked by our hypothetical intervention
$h_{2}^{*}+h_{3}^{*}$. These combined interventions could be particularly
useful in resource constrained environments, since they may result
in larger community level effects at the lower coverage of the exposure
assignment.

Results from simulations with dependence due to direct transmission
show that conducting inference while ignoring the nature of the dependence
in such datasets generally results in anticonservative variance estimates
and under-coverage of CIs, which can be as low as 50\% even for very
large sample sizes (``\emph{i.i.d.\ Var}'' in the left panel of Figure
\ref{fig:CIres.EY.prefattach}). The
CIs based on the dependent variance estimates (``\emph{dependent
IC Var}'') obtain nearly nominal
coverage of 95\% for large enough sample sizes, but can suffer in
smaller sample sizes due to lack of asymptotic normality and near-positivity
violations. Notably, the CIs based on the parametric bootstrap variance
estimates provide the most robust coverage for smaller sample sizes,
while attaining the nominal 95\% coverage in large sample sizes for
nearly all of the simulation scenarios (``\emph{bootstrap Var}''). The apparent robustness of the parametric bootstrap method for inference in small sample sizes,
even as low as $n=500$, was one of the surprising finding of this
simulation study. 
Similarly, in the simulations with latent variable
dependence the variance estimates that assume conditionally i.i.d.\
outcomes, i.e.\ that dependence may be due to direct transmission but
not to latent variables, are anti-conservative.

Code for all simulations is available in a github repository
(\verb!github.com/osofr/Ogburn_etal_simulations!).

\section{Data Analysis }
\label{sec:data}


We reanalyze an influential study that used the partially reconstructed social network of FHS study participants in order to study peer effects for obesity (\citealp{christakis2007spread}; hereafter CF). 
To assess peer influence for obesity using FHS data, CF fit longitudinal logistic regression models of each individual's obesity status at exam $k=2,3,4,5,6,7$  onto each of the individual's social contacts' obesity statuses at exam $k$ and $k-1$ with a separate entry into the model for each contact, controlling for individual covariates and for the node's own obesity status at exam $k-1$. They used generalized estimating equations to account for correlation within individual over time, but their model assumes independence across individuals. CF fit this model separately for ten different types of social connections, including siblings, spouses, and immediate neighbors, with estimates of the increased risk of obesity ranging from 27\% to 171\%, many of which were statistically significant.  In contrast with our approach, in which each ego is treated as a single (possibly dependent) observation, the pairwise approach that treats each network tie as an independent observation can result in incoherent models for the full network  \citep{lyons2011spread,ogburn2014vaccines}. Furthermore, \cite{lee2020network} found evidence of significant network dependence across observations, suggesting that even if the model were coherent the analysis is invalid due to unaccounted statistical dependence.  However, until now no method has been available to reanalyze these data taking into account the network structure and corresponding causal and statistical dependence.  

We reanalyzed data from the first two exams, using all ten types of social connections simultaneously ($n=3766$). The full $\mathbf{R}$
code for this analysis is available in a github repository
 (\verb!github.com/osofr/Ogburn_etal_simulations!);
 public versions of FHS data through 2008 are available from the dbGaP database. 
 Instead of specifying pairwise models and treating each pair (i.e.\ each network tie) as an independent observation, our methods account for the entire social network structure and allow for considerable causal and statistical dependence among subjects. 
For each subject $i$ we specified $m(V_i,W_i)$ to be the regression model used in CF (2007), but with proportion of obese alters replacing the indicator that a single friend is obese at each visit.  That is, we specified that the expected probability of obesity for subject $i$ at visit $2$ is a function of the proportion of $i$'s alters who were obese at visit 2 (this is the exposure of interest), subject $i$'s obesity at visit 1, the proportion of $i$'s alters who were obese at visit 1, and subject $i$-specific covariates age, sex, and education.  CF argue that controlling for alters' obesity status at visit 1 controls for confounding due to homophily.  It is more likely that confounding due to homophily cannot be controlled using these data \citep{shalizi2011homophily, cohen2008obesity, noel2011unfriending} and we do not purport to be estimating a true, unconfounded causal effect.  However, under CF's assumption of unconfoundedness, we can estimate the expected proportion of subjects who would be obese at visit 2 under various hypothetical interventions on each subject's alters' obesity statuses.  

The pairwise parameter that CF estimated is not well-defined in a model that accounts for more than one tie simultaneously. Instead we estimated the expected probability of obesity at visit 2 under a hypothetical intervention to increase the number of each subject's obese alters by 1; this dynamic intervention is well-defined under the assumption that a node's alters' obesity has an effect only through a parsimonious summary measure $s_X$. This intervention is similar to CF's pairwise parameter in that it estimates the effect of a single alter's change in obesity status. The observed empirical probability of obesity at visit 2 was 0.137. The predicted outcome under intervention was identical up to three decimal places with a 95\% parametric bootstrap confidence interval of $(0.127,  0.147)$.  
We re-ran the CF analysis using only outcome data from the second visit in order to more closely track our own analysis. We then used the single time point CF logistic regression to predict the outcome under an intervention to set the alter in each ego-alter pair to have obesity status $1$; this corresponds more closely (though not directly) to our causal estimand. Compared to the observed empirical probability of obesity of $0.137$, the predicted outcome under intervention was $0.147$ with a 95\% bootstrap confidence interval of $(0.138, 0.156)$. Like the original CF analyses, this seems to support the claim of peer effects for obesity. The non-null point estimate could be due in part to spurious associations due to dependence \citep{lee2020network}, and the slightly shorter confidence interval compared to our analysis could be due to underestimated variance. 

In summary, our analysis is consistent with the hypothesis that the significant results of CF are spurious, due to dependence and/or model misspecification rather than true associations or effects.
We caution against interpreting our estimates as true causal effects, both because of unobserved confounding in the FHS data and because the exposure was measured at the same time as the outcome.  However, this is still an instructive comparison between our methods and the naive methods that are currently in common use. Accounting for the interdependence of the subjects in the FHS data 
undermines the findings of strong contagion effects for obesity.  

\section{Conclusion} \label{sec:Conclusion}

We proposed new methods that allow for causal and statistical inference
using data from a single interconnected
social network, with causal and statistical dependence informed by network ties. In contrast to existing methods, our methods do not require
randomization of an exogenous treatment and they have proven performance
under asymptotic regimes in which the number of network ties grows
(slowly) with sample size.  In the absence of appropriate methods for assessing peer effects researchers have routinely relied on naive methods developed for independent units, and our analysis of peer effects for obesity in the Framingham Heart Study illustrates the dangers of that approach and the importance of new methods like ours.  


\section*{Acknowledgements}
The authors are grateful to Caleb Miles, Eric Tchetgen Tchetgen and Victor De Gruttola for helpful comments.  Elizabeth L. Ogburn was supported by ONR grant N000141512343 and N000141812760.   Oleg Sofrygin and Mark van der Laan were supported by NIH grant R01 AI074345-07.  
\newpage
\begin{center}
\centering
{
\textbf{\Huge{Appendix}}
}
\end{center}

\part{Proof of Theorem 1}
\section{Regularity conditions}

For a real-valued function $\mathbf{c}\mapsto f(\mathbf{c})$, let
the $L^{2}(P)$-norm of $f(\mathbf{c})$ be denoted by $\norm{f}=E[f(\mathbf{C})^{2}]^{1/2}$.
Define \textbf{$\mathcal{M}_{m}$ }and \textbf{$\mathcal{M}_{\tilde{h}}$
}as the classes of possible functions that can be used for estimating
the two nuisance parameters $m$ and $\tilde{h}\equiv\bar{h}_{x^{*}}/\bar{h}$,
respectively. Note that a model for $g$ plus the empirical distribution
of covariates $\mathbf{C}$ determines $\tilde{h}$. Equivalent assumptions
could be stated in terms of $g$ instead of $\tilde{h}$, but we focus
on $\tilde{h}$ because that is the functional of $g$ and $\mathbf{C}$
that we model in our estimating procedure. Assume that the TMLE update
$\hat{m}_{\hat{\epsilon}}\in{\cal M}_{m}$ with probability 1 and
assume that $\hat{\bar{h}}_{x^{*}}/\hat{\bar{h}}\in{\cal M}_{\tilde{h}}$
with probability 1. Finally, define the following dissimilarity measure
on the cartesian product of $\mathcal{F}\equiv{\cal M}_{m}\times{\cal M}_{\tilde{h}}$:
\[
d\left(\left(h,m\right),\left(\tilde{h},\tilde{m}\right)\right)=\max\left(\sup_{v\in{\cal V}}\mid h-\tilde{h}\mid(v),\sup_{v\in{\cal V}}\mid m-\tilde{m}\mid(v)\right).
\]

The following are the regularity conditions required for Theorem 1,
i.e. for asymptotic normality of the TMLE $\hat{\psi}^{*}$. 
\begin{description}
\item [{Uniform consistency:}] Assume that 
\[
d\left(\left(\hat{\bar{h}}_{x^{*}}/\hat{\bar{h}},\hat{m}_{\hat{\epsilon}}\right),\left(\bar{h}_{x^{*}}/\bar{h},m\right)\right)\rightarrow0
\]
in probability as $n\rightarrow\infty$. Note that this assumption
is only needed for proving the asymptotic equicontinuity of our process;
it is not needed for proofs of relevant convergence rates for the
second order terms. 
\item [{Bounded entropy integral:}] Assume that there exists some $\eta>0$,
so that $\int_{0}^{\eta}\sqrt{\log\left(N(\epsilon,{\cal F},d)\right)}d\epsilon<\infty$,
where $N(\epsilon,{\cal F},d)$ is the number of balls of size $\epsilon$
w.r.t. metric $d$ needed to cover ${\cal F}$. 
\item [{Universal bound:}] Assume $\sup_{f\in{\cal F},\mathbf{O}}\mid f\mid(\mathbf{O})<\infty$,
where the supremum of $\mathbf{O}$ is over a set that contains $\mathbf{O}$
with probability one. This assumption will typically be a consequence
of the choosing a specific function class ${\cal F}$ that satisfies
the above entropy condition. 
\item [{Positivity:}] Assume 
\[
\sup_{v\in{\cal V},w\in{\cal W}}\frac{\bar{h}_{x^{*}}(v,w)}{\bar{h}(v,w)}<\infty.
\]

\item [{Consistency and rates for estimators of nuisance parameters:}] Assume
that $\norm{\hat{m}-m}\norm{\hat{\bar{h}}-\bar{h}}=o_{P}\left(\left(C_{n}\right)^{-1/2}\right)$.
Note that this rate is achievable if, for example, estimation of $\bar{h}$
relies on some pre-specified parametric model, or if both $\bar{h}$
and $m$ are estimated at rate $C_{n}^{-1/4}$.
\item [{Rate of the second order term:}] Assume that 
\[
R_{n1}\equiv-\int_{v,w}\left\{ \left(\frac{\hat{\bar{h}}_{x^{*}}}{\hat{\bar{h}}}-\frac{\bar{h}_{x^{*}}}{\bar{h}}\right)(\hat{m}_{\hat{\epsilon}}-m)(v,w)\bar{h}(v,w)d\mu(v,w)\right\} =o_{P}\left(1/\sqrt{C_{n}}\right).
\]
Note that this condition is provided here purely for the sake of completeness,
since it will satisfied based on the previously assumed rates of convergence
for $\norm{\hat{m}-m}\norm{\hat{\bar{h}}-\bar{h}}$. This follows
from the fact that the parametric TMLE update step $\hat{m}_{\hat{\epsilon}}$
of $\hat{m}$ will have a negligible effect on the rate of convergence
of the initial estimator $\hat{m}$, that is, $\hat{m}_{\hat{\epsilon}}$
will converge at ``nearly'' the same rate as $\hat{m}$. 
\item [{Limited connectivity and limited dependence of $\mathbf{Y}$,$\mathbf{X}$ and $\mathbf{C}$:}] Let
$K_{max,n}=max_{i}\{K_{i}\}$ for a fixed network with $n$ nodes.
Assume that $K_{max,n}^{2}/n$ converges to $0$ in probability as
$n\rightarrow\infty$. 
\end{description}
A key condition is \textit{consistency and rates for estimators of
nuisance parameters}. This condition will be satisfied, for example,
if both models converge to the truth at rate $C_{n}^{1/4}$.
It can in fact be weakened, but for a more general discussion and
the corresponding technical conditions we refer to the Appendix of
\citet{vanderlaan2012}. With the exception of the rates of convergence,
the more general conditions for asymptotic normality of the TMLE presented
in that paper apply to our setting as well.
{max,n}

\section{Overview of the proof of Theorem 1 }

We want to show that $\sqrt{C_{n}}(\hat{\psi}-\psi)$ converges in
law to a Normal limit as $n$ goes to infinity for some rate $\sqrt{C_{n}}$
such that $\sqrt{n/\left(K_{max,n}\right)^{2}}\leq\sqrt{C}_{n}\leq\sqrt{n}$,
where the rate $\sqrt{C_{n}}$ is the order of the variance of the
sum of the first-order linear approximation of $(\hat{\psi}-\psi)$.

Broadly, the proof has two parts: First, we require that the second
order terms in the expansion of $\hat{\psi}-\psi$ are stochastically
less than $1/\sqrt{C_{n}}$, that is that 
\begin{eqnarray*}
\hat{\psi_{n}}-\psi & = & \frac{1}{n}\sum_{i=1}^{n}\left\{ f_{i}(\mathbf{O})-E[f_{i}(\mathbf{O})]\right\} +o_{p}\left(1/\sqrt{C_{n}}\right),\\
\end{eqnarray*}
where $f_{i}(\mathbf{O})$ is the contribution of the $i$th observation
to the estimator. 
Then proving asymptotic normality of the TMLE amounts to the asymptotic
analysis of the sum $\frac{1}{n}\sum_{i=1}^{n}\left\{ f_{i}(\mathbf{O})-E[f_{i}(\mathbf{O})]\right\} $,
and the second part of the proof establishes that the first order
terms converge to a normal distribution when scaled by $\sqrt{C_{n}}$,
that is that $\sqrt{C_{n}}\frac{1}{n}\sum_{i=1}^{n}\left\{ f_{i}(\mathbf{O})-E[f_{i}(\mathbf{O})]\right\} \rightarrow_{d}N(0,\sigma^{2})$ for some finite $\sigma^{2}$.

The proof that the second order terms are stochastically less than
$1/\sqrt{C_{n}}$ is an extension of the empirical process theory
of \citet{van1996weak} and follows the same format as the proof in
\citet{vanderlaan2012}. Indeed, the proof offered by \citet{vanderlaan2012}
holds immediately after replacing the rate or scaling factor $\sqrt{n}$
with $\sqrt{C_{n}}$ throughout. Only one step in the \citet{vanderlaan2012}
proof relies on the network structure, which is the major difference
between the setting in that paper, where the number of network connections
is fixed and bounded as $n$ goes to infinity, and the present setting:
the proof requires bounding the Orlicz norms of several empirical
processes corresponding to components of the influence function for
$\psi$, and a key step is bounding the expectation of $E\left[\left|X_{n}(f)\right|^{p}\right]$
, where $X_{n}(f)$ is the stochastic process that describes the difference
between the empirical (indexed by $n$) and the true distribution
functions of a component of the influence function for $\psi$. This
step relies on a combinatorial argument about nature of overlapping
friend groups in the underlying network, and the argument for the
case of growing $K_{i}$ is subsumed by the argument for fixed $K$
in \citet{vanderlaan2012}.

The proof that the first order terms converge to a normal distribution
requires a central limit theorem for dependent data with growing and
possibly irregularly sized dependency neighborhoods, where a dependency
neighborhood for unit $i$ is a collection of observations on which
the observations for unit $i$ may be dependent. We prove such a CLT
in Lemmas 1 and 2. In the next section we use the CLT for growing
and irregular dependency neighborhoods, along with an orthogonal decomposition
of the first order terms, to prove the remainder of Theorem 1.

\section{Central limit theorem for first order terms}

Proving asymptotic normality of the TMLE amounts to the asymptotic
analysis of the sum $\frac{1}{n}\sum_{i=1}^{n}\left\{ f_{i}(\mathbf{O})-E[f_{i}(\mathbf{O})]\right\} $.
As a start, decompose $\sum_{i=1}^{n}\left\{ f_{i}(\mathbf{O})-E[f_{i}(\mathbf{O})]\right\} $
into a sum of three orthogonal components: 
\begin{eqnarray*}
f_{\mathbf{Y},i}(\mathbf{Y},\mathbf{X},\mathbf{C}) & = & f_{i}(\mathbf{O})-E\left[f_{i}(\mathbf{O})\mid\mathbf{X},\mathbf{C}\right],\\
f_{\mathbf{X},i}(\mathbf{X},\mathbf{C}) & = & E[f_{i}(\mathbf{O})\mid\mathbf{X},\mathbf{C}]-E[f_{i}(\mathbf{O})\mid\mathbf{C}],\mbox{ and}\\
f_{\mathbf{C},i}(\mathbf{C}) & = & E[f_{i}(\mathbf{O})\mid\mathbf{C}]-E[f_{i}(\mathbf{O})].
\end{eqnarray*}
Note that 
\[
f_{i}(\mathbf{O})-E[f_{i}(\mathbf{O})]=f_{\mathbf{Y},i}(\mathbf{Y},\mathbf{X},\mathbf{C})+f_{\mathbf{X},i}(\mathbf{X},\mathbf{C})+f_{\mathbf{C},i}(\mathbf{C})
\]
and with slight abuse of notation we will also write $f_{\mathbf{Y},i}(\mathbf{O})$,
$f_{\mathbf{X},i}(\mathbf{O})$ and $f_{\mathbf{C},i}(\mathbf{O})$.
Let $f_{\mathbf{Y}}(\mathbf{O})=\sum_{i=1}^{n}f_{\mathbf{Y},i}(\mathbf{O})$,
$f_{\mathbf{X}}(\mathbf{O})=\sum_{i=1}^{n}f_{\mathbf{X},i}(\mathbf{O})$
and $f_{\mathbf{C}}(\mathbf{O})=\sum_{i=1}^{n}f_{\mathbf{C},i}(\mathbf{O})$.
For $i=1,\ldots,n$, let

\begin{eqnarray*}
Z{}_{Y,i} & = & \frac{f_{\mathbf{Y},i}(\mathbf{Y},\mathbf{X},\mathbf{C})}{\sqrt{Var(\sum_{i=1}^{n}f_{\mathbf{Y},i}(\mathbf{Y},\mathbf{X},\mathbf{C}))}}\\
Z{}_{X,i} & = & \frac{f_{\mathbf{X},i}(\mathbf{X},\mathbf{C})}{\sqrt{Var(\sum_{i=1}^{n}f_{\mathbf{X},i}(\mathbf{X},\mathbf{C}))}}\\
Z{}_{C,i} & = & \frac{f_{\mathbf{C},i}(\mathbf{C})}{\sqrt{Var(\sum_{i=1}^{n}f_{\mathbf{C},i}(\mathbf{C}))}}.
\end{eqnarray*}
and

\begin{eqnarray*}
Z'_{Y,i} & = & \frac{f_{\mathbf{Y},i}(\mathbf{Y},\mathbf{X},\mathbf{C})\left|(\mathbf{X},\mathbf{C})\right.}{\sqrt{Var(\sum_{i=1}^{n}f_{\mathbf{Y},i}(\mathbf{Y},\mathbf{X},\mathbf{C})\left|(\mathbf{X},\mathbf{C})\right.)}}\\
Z'_{X,i} & = & \frac{f_{\mathbf{X},i}(\mathbf{X},\mathbf{C})\left|\mathbf{C}\right.}{\sqrt{Var(\sum_{i=1}^{n}f_{\mathbf{X},i}(\mathbf{X},\mathbf{C})\left|\mathbf{C}\right.)}}\\
\end{eqnarray*}
We use the prime to denote conditional random variables: $Z'_{Y,i}$
conditions $f_{\mathbf{Y},i}(\mathbf{O})$ on $(\mathbf{X},\mathbf{C})$
and rescales it by the standard error of $f_{\mathbf{Y}}(\mathbf{O})|\left(\mathbf{X},\mathbf{C}\right)$.
Similarly, $Z'_{X,i}$ conditions $f_{\mathbf{X},i}(\mathbf{O})$
on $\mathbf{C}$ and rescales it by the standard error of $f_{\mathbf{X}}(\mathbf{O})|\mathbf{C}$.
Let 
\begin{align*}
 & \sigma_{nY}^{2}(\mathbf{x},\mathbf{\mathbf{c}})=Var\left(\sum_{i=1}^{n}f_{\mathbf{Y},i}(\mathbf{Y},\mathbf{x},\mathbf{c})\left|(\mathbf{X}=\mathbf{x},\mathbf{C}=\mathbf{\mathbf{c}})\right.\right)\\
 & \sigma_{nY}^{2}=E_{P_{\mathbf{X},\mathbf{C}}}\left[\sigma_{nY}^{2}(\mathbf{X},\mathbf{\mathbf{C}})\right],
\end{align*}
\begin{align*}
 & \sigma_{nX}^{2}(\mathbf{\mathbf{c}})=Var\left(\sum_{i=1}^{n}f_{\mathbf{X},i}(\mathbf{X},\mathbf{c})\left|\mathbf{C}=\mathbf{c}\right.\right)\\
 & \sigma_{nX}^{2}=E_{P_{\mathbf{C}}}\left[\sigma_{nX}^{2}(\mathbf{\mathbf{C}})\right],
\end{align*}
and 
\[
\sigma_{nC}^{2}=Var\left(\sum_{i=1}^{n}f_{\mathbf{C},i}(\mathbf{C})\right).
\]
Note that by the law of total variance $\sigma_{nX}^{2}=Var(\sum_{i=1}^{n}f_{\mathbf{X},i}(\mathbf{X},\mathbf{C}))$
and $\sigma_{nY}^{2}=Var\left(\sum_{i=1}^{n}f_{\mathbf{Y},i}(\mathbf{Y},\mathbf{X},\mathbf{C})\right)$. 

Let $Z_{nY}'$ denote $\sum_{i=1}^{n}Z_{Y,i}'$, $Z_{nX}'$ denote
$\sum_{i=1}^{n}Z_{X,i}'$, $Z_{nY}$ denote $\sum_{i=1}^{n}Z_{Y,i}$,
$Z_{nX}$ denote $\sum_{i=1}^{n}Z_{X,i}$, and $Z_{nC}$ denote $\sum_{i=1}^{n}Z_{C,i}$.
We will establish convergence in distribution of each of the three
terms separately. Because $Z_{nY}^{'}$ and $Z_{nX}^{'}$ converge
to distributions that do not depend on their conditioning events,
conditional convergence in distribution implies convergence of $Z_{nY}$
and $Z_{nX}$ to the same limiting distributions. Since $f_{Y}(\mathbf{O}),$$f_{X}(\mathbf{O})$,
and $f_{C}(\mathbf{O})$ are orthogonal by construction, the variance
of the limiting distribution of their sum is the sum of their marginal
variances. If the three processes converge at the same rate the limiting
variance will be the sum of the variances of the three processes.
However, the three terms may converge at different rates, in which
case the limiting distribution of $\hat{\psi}-\psi$ will be given
by the limiting distribution of the term(s) with the slowest rate
of convergence.

In order to show that $Z_{nX}'$, $Z_{nY}'$, and $Z_{nC}$ all converge
in distribution to a $N(0,1)$ random variable, we can use three separate
applications of the central limit theorem given in Lemma 1, which
is based on Stein's method.

Stein's method \citep{stein1972bound} quantifies the error in approximating
a sample average with a normal distribution. (For an introduction
to Stein's method see \citealp{ross2011fundamentals}.) Stein's method
has been used to prove CLTs for dependent data with dependence structure
given by \textit{dependency neighborhoods} \citep{chen2004normal}:
the dependency neighborhood for observation $i$ is a set of indices
$D_{i}$ such that observation $i$ is independent of observation
$j$, for any $j\notin D_{i}$. Conditionally on $\mathbf{C}$, $f_{X,i}$
and $f_{X,j}$ are independent for any nodes $i$ and $j$ such that
$A_{ij}=0$ and there is no $k$ with $A_{ik}=A_{jk}=1$, that is
for any nodes that do not share a tie or have any mutual network contacts.
The same is true for $f_{Y,i}$ and $f_{Y,j}$ conditional on $\mathbf{X}$
and $\mathbf{C}$ and for $f_{C,i}$ and $f_{C,j}$. Thus the three
collections of random variables $Z_{X,1}',...,Z_{X,n}'$, $Z'_{Y,1},...,Z'_{Y,n}$,
and $Z_{C,1},...,Z_{C,n}$ each has a dependency neighborhood structure
with $D_{i}=i\cup\{j:A_{ij}=1\}\cup\{k:A_{jk}=1\mbox{ for }j:A_{ij}=1\}$,
that is the ``friends'' and ``friends of friends'' of node $i$.
Define the indicators $R(i,j)$ for any $(i,j)\in\{1,\ldots,n\}^{2}$
to be an indicator of dependence between $Z_{X,i}$ and $Z_{X,j}$,
$R(i,j)=1$ iff $j\in D_{i}$ or, equivalently, if $i\in D_{j}$.
For any $i\in\{1,\ldots,n\}$ the set $\{Z'_{X,j}:\left(R(i,j)=1,j\in\{1,\ldots,n\}\right)\}$
forms the dependency neighborhood of $Z'_{X,i}$ and the collection
$\{Z'_{X,j}:\left(R(i,j)=0,j\in\{1,\ldots,n\}\right)\}$ is independent
of $Z'_{X,i}$. The same logic applies to defining the dependency
neighborhoods for $Z'_{Y,1},...,Z'_{Y,n}$ conditional on $\mathbf{X}$
and $\mathbf{C}$, and for $Z_{C,1},...,Z_{C,n}$ based on (unconditional)
independence of each $f_{C,i}(\mathbf{O})$ and $f_{C,j}(\mathbf{O})$,
as determined by the network structure and the distributional assumptions
made for the baseline covariates $\mathbf{C}$.

Applied to $Z'_{nX}$ , Stein's method provides the following upper
bound

\begin{eqnarray*}
d(Z'_{nX},Z) & \leq & \sum_{i=1}^{n}\sum_{j,k\in D_{i}}E\left|Z'_{X,i}Z'_{X,j}Z'_{X,k}\right|\\
 &  & +\sqrt{\frac{2}{\pi}}\sqrt{Var\left(\sum_{i=1}^{n}\sum_{j\in D_{i}}Z'_{X,i}Z'_{X,j}\right)},
\end{eqnarray*}
where $Z\sim N(0,1)$ and $d(\cdot,\cdot)$ is the Wasserstein distance
metric \citep{vallender1974calculation}.

In order to show that $Z'_{nX}$ converges in distribution to $Z$,
we must show that the righthand side of the inequality converges to
zero as $n$ goes to infinity. We will first show that this convergence
holds when $K_{i}=\left|F_{i}\right|=K_{max,n}$ for all $i$, that
is when all nodes have the same number of ties. We will then show
that removing any tie from the network preserves an upper bound on
the righthand side of the inequality. This completes our proof that
for any network such that $K_{i}\leq K_{max,n}$ for all $i$ and
$\frac{K_{max}^{2}(n)}{n}$ converges to zero as $n$ goes to infinity,
$Z'_{nX}$ converges in distribution to a standard normal distribution.
The same argument applied to $Z{}_{nC}$ proves that it has a Normal
limiting distributions as well.\\

\begin{lemma}[Applying Stein's Method to the dependent sum]
\label{lem:Steins}\label{stein's lemma}
Consider a network of nodes given by adjacency matrix $A$. Let $U_{1},...,U_{n}$
be bounded mean-zero random variables with finite fourth moments and
with dependency neighborhoods $D_{i}=i\cup\{j:A_{ij}=1\}\cup\{k:A_{jk}=1\mbox{ for }j:A_{ij}=1\}$,  
and let $K_{i}$ be the degree of node $i$. Note that the size of each dependency neighborhood is bounded above by $K_{max,n}^2$. If $K_{i}=K_{max,n}$
for all $i$ and $K_{max,n}^{2}/n\rightarrow0$, then $\frac{\sum U_{i}}{\sqrt{var(\sum U_{i})}}\overset{D}{\rightarrow}N(0,1)$.
\end{lemma}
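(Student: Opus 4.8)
The plan is to normalize the sum so that it has unit variance, invoke the dependency-neighborhood version of Stein's method --- the very inequality already displayed above for $Z'_{nX}$ --- and then show the two resulting error terms vanish under $K_{max}(n)^2/n\to0$. Write $K:=K_{max}(n)$, let $\sigma_n^2:=\mathrm{Var}\bigl(\sum_{i=1}^n U_i\bigr)$, and set $W_i:=U_i/\sigma_n$, so $W:=\sum_{i=1}^n W_i$ has mean $0$ and variance $1$; since the Wasserstein distance metrizes convergence in distribution on $\mathbb{R}$, it suffices to show $d(W,Z)\to0$ for $Z\sim N(0,1)$. The $W_i$ inherit the dependency neighborhoods $D_i=\{i\}\cup\{j:A_{ij}=1\}\cup\{k:A_{jk}=1\text{ for some }j\text{ with }A_{ij}=1\}$, and because every node has exactly $K$ neighbors we get the uniform bound $|D_i|\le1+K+K(K-1)\le1+K^2$ for all $i$. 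The Stein bound then reads $d(W,Z)\le T_1+T_2$ with $T_1=\sum_i\sum_{j,k\in D_i}E|W_iW_jW_k|$ and $T_2=\sqrt{2/\pi}\,\bigl(\mathrm{Var}(\sum_i\sum_{j\in D_i}W_iW_j)\bigr)^{1/2}$, and the task reduces to bounding each.

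For $T_1$ I would use the a.s.\ bound $|U_i|\le M$ from the hypothesis to get $|W_iW_jW_k|\le M^3/\sigma_n^3$ pointwise, with $\sum_i|D_i|^2\le n(1+K^2)^2$ summands, so $T_1=O(nK^4/\sigma_n^3)$; once the order of $\sigma_n^2$ is inserted (see below) this is of order $K/\sqrt n$ and vanishes. For $T_2$ I would expand the variance as $\sum\sum\mathrm{Cov}(W_iW_j,W_kW_l)$ over pairs $(i,j)$ with $j\in D_i$ and $(k,l)$ with $l\in D_k$; each summand is at most $M^4/\sigma_n^4$ in absolute value, and --- this is the only place network topology enters --- it vanishes unless the dependency neighborhoods attached to $\{i,j\}$ and $\{k,l\}$ overlap. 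With all degrees equal to $K$, such overlap forces $k$ into a ball of bounded graph-radius about $i$; counting those, together with the $O(K^2)$ choices of $l\in D_k$ and the $O(nK^2)$ choices of $(i,j)$, produces a bound $\mathrm{Var}(\cdots)=O(nK^{b}/\sigma_n^4)$, hence $T_2=O(\sqrt n\,K^{b/2}/\sigma_n^2)$ for a fixed power $b$. One then uses the order of $\sigma_n^2$: in the maximal-dependence regime that dictates the slowest rate, $\sigma_n^2\asymp nK^2$, so $T_2=O(K^{b/2-2}/\sqrt n)$. The irregular-degree case is afterward reduced to this one by the edge-deletion monotonicity argument sketched in the text (Lemma 2).

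The step I expect to be the main obstacle is precisely the covariance count in $T_2$: the crude neighborhood-overlap bookkeeping above yields an exponent $b$ that is too large for $K^{b/2-2}/\sqrt n\to0$ to follow from merely $K^2=o(n)$, so closing that gap is where the work is. I would do it either by exploiting cancellation in the covariances $\mathrm{Cov}(W_iW_j,W_kW_l)$ --- many of which are small, not just bounded, since the $W$'s are mean zero --- or by replacing the displayed inequality with a sharper local-dependence Stein bound in which the second term is controlled through $\sum_i E\bigl|W_i(\sum_{j\in D_i}W_j)^2\bigr|$ directly rather than through $\max_i|D_i|$; this is the first-order analogue of the ``overlapping friend groups'' combinatorics that \citet{vanderlaan2012} use for the second-order terms. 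A secondary technical point is that $\sigma_n$ enters all of these estimates to negative powers, so the argument must be paired with a lower bound on $\mathrm{Var}(\sum_i U_i)$ --- of order at least $n$, and of order $nK^2$ under maximal dependence --- which in the setting of Theorem 1 comes from the non-degeneracy of the individual influence-function contributions together with the dependency structure, and is exactly what fixes the admissible scaling range $n/K^2\le C_n\le n$.
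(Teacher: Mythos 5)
Your proposal follows the paper's proof almost step for step: same normalization, same dependency-neighborhood Stein bound $d(W,Z)\le T_1+T_2$, the same treatment of $T_1$ via boundedness and $\sum_i|D_i|^2=O(nK^4)$, and the same reduction of the irregular-degree case by tie removal (which is exactly the paper's Lemma~2). The one place you diverge is in flagging the covariance count in $T_2$ as the ``main obstacle,'' requiring either cancellation or a sharper Stein inequality. That obstacle is an artifact of your overlap criterion, not a real gap: $\mathrm{Cov}(W_iW_j,W_kW_l)$ vanishes unless the index pair $\{k,l\}$ itself meets $D_i\cup D_j$ --- you do not need $D_k$ to meet $D_i\cup D_j$, which is what forces $k$ into a large-radius ball and inflates your exponent $b$. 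Counting with the correct criterion: there are $O(nK^2)$ pairs $(i,j)$ with $j\in D_i$; for each, either $k\in D_i\cup D_j$ ($O(K^2)$ choices, then $O(K^2)$ choices of $l\in D_k$) or $l\in D_i\cup D_j$ ($O(K^2)$ choices, then $k\in D_l$ by symmetry of the dependence relation, $O(K^2)$ choices). This gives $b=6$, hence $T_2=O\bigl(\sqrt{nK^6}/\sigma_n^2\bigr)=O\bigl(\sqrt{n}\,K^3/(nK^2)\bigr)=O\bigl(\sqrt{K^2/n}\bigr)$, the same order as $T_1$, which vanishes under $K^2/n\to0$. No cancellation in the covariances is needed. (For what it is worth, the paper's own bookkeeping at this step asserts the still smaller bound $\mathrm{Var}\bigl(\sum_i\sum_{j\in D_i}U_iU_j\bigr)=O\bigl(\sum_{i,j,k}R(i,j)R(i,k)\bigr)=O(nK^4)$, apparently dropping the $O(K^2)$ choices of the second index $l$; but even the cruder $O(nK^6)$ count suffices, so nothing is lost.)

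Two of your side remarks are on target and worth keeping. First, the lower bound $\sigma_n^2\gtrsim\sum_{i,j}R(i,j)\asymp nK^2$ is indeed indispensable, since $\sigma_n$ enters every estimate to a negative power, and it does not follow from bounded moments; the paper simply asserts that $\mathrm{var}(\sum U_i)$ ``stabilizes to a constant when scaled by $\sum_{i,j}R(i,j)$,'' so this is effectively an unproved non-degeneracy hypothesis in both your argument and theirs. Second, your observation that the regular-degree case is the binding one and that irregular networks are handled by edge-deletion monotonicity matches the paper's structure exactly.
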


\begin{proof}[Proof of Lemma~\ref{lem:Steins}]
Let $U'_{i}=\frac{U_{i}}{\sqrt{var(\sum U_{i})}}$. Application of Stein's
method often involves defining the so-called ``Stein coupling''
$(W,W',G)$ \citep{fang2011thesis,fang2015rates}. Consider the following
sum of dependent variables $W=\sum_{i=1}^{n}U'_{i}$. Define a discrete
random variable $I$ distributed uniformly over $\{1,\ldots,n\}$
and define another random variable $W'=(W-\sum_{j=1}^{n}R(I,j)U'_{j})$.
Finally, define $G=-nU'_{I}$ and note that $(W,W',G)$ forms a Stein
coupling \citep{fang2011thesis,fang2015rates}. We also let $D=(W'-W)=-\sum_{j=1}^{N}R(I,j)U'{}_{j}$.
This Stein coupling allows us then to derive the upper bound 
\begin{align}
d(W,Z) & \leq\sum_{i=1}^{n}\sum_{j,k\in D_{i}}E\left|U'{}_{i}U'_{j}U'_{k}\right|+\sqrt{\frac{2}{\pi}}\sqrt{Var\left(\sum_{i=1}^{n}\sum_{j\in D_{i}}U'_{i}U'_{j}\right)},\label{eq:stein}
\end{align}
 as shown in \citet{ross2011fundamentals}. We will now show that,
for any network structure, 
\begin{eqnarray}
 &  & \sum_{i=1}^{n}\sum_{j,k\in D_{i}}E\left|U'{}_{i}U'{}_{j}U'_{k}\right|+\sqrt{\frac{2}{\pi}}\sqrt{Var\left(\sum_{i=1}^{n}\sum_{j\in D_{i}}U'_{i}U'_{j}\right)}\nonumber \\
 & = & O\left(\frac{\sum_{i,j,k}R(i,j)R(i,k)}{\left[\sum_{i,j}R(i,j)\right]^{3/2}}\right).\label{eq:to show}
\end{eqnarray}
The righthand side of the above equation is equal to $\sqrt{\frac{\left(K_{max,n}\right)^{2}}{n}}$
under the assumption of $K_{max,n}$ ties for each node $i=\{1,\ldots,n\}$.
By assumption, we also have that $\frac{K_{max,n}}{\sqrt{n}}$ converges
to zero as $n$ goes to infinity, and therefore if we can show equation
\eqref{eq:to show} we have proved that $\frac{\sum U_{i}}{\sqrt{var(\sum U_{i})}}\overset{D}{\rightarrow}N(0,1)$.

Consider the term
\[
\sum_{i=1}^{n}\sum_{j,k\in D_{i}}E\left|U'{}_{i}U'{}_{j}U'_{k}\right|=\dfrac{1}{var(\sum U_{i})^{3/2}}\sum_{i=1}^{n}E\left\{ \left|U_{i}\left(\sum_{j\in D_{i}}U_{k}\right)^{2}\right|\right\} .
\]
By the assumption of bounded 4th moments, $var(\sum U_{i})^{3/2}=O\left(\left[\sum_{i,j}R(i,j)\right]^{3/2}\right)$,
that is, $var(\sum U_{i})$ stabilizes to a constant when scaled by
$\sum_{i,j}R(i,j)$. Using the fact that each $|U_{i}|$ is bounded
we get 
\begin{eqnarray*}
 &  & \sum_{i=1}^{N}E\left\{ \left|U_{i}\left(\sum_{j\in D_{i}}U_{j}\right)^{2}\right|\right\} \\
 & \leq & M\sum_{i=1}^{n}\left\{ \sum_{j,k}R(i,j)R(i,k)\right\} \\
 & = & M\sum_{i,j,k}R(i,j)R(i,k),
\end{eqnarray*}
for some positive constant $M<\infty$. Combining the above expressions,
we get 
\[
\sum_{i=1}^{n}\sum_{j,k\in D_{i}}E\left|U'{}_{i}U'{}_{j}U'_{k}\right|=O\left(\frac{\sum_{i,j,k}R(i,j)R(i,k)}{\left[\sum_{i,j}R(i,j)\right]^{3/2}}\right).
\]

Now consider the second term: 
\begin{align*}
\sqrt{Var\left(\sum_{i=1}^{n}\sum_{j\in D_{i}}U'_{i}U'_{j}\right)} & =\frac{\sqrt{Var\left(\sum_{i=1}^{n}\sum_{j\in D_{i}}U{}_{i}U{}_{j}\right)}}{var(\sum U_{i})^{2}}.
\end{align*}
There are $\sum_{i,j}R(i,j)$ terms in $\sum_{i=1}^{n}\sum_{j\in D_{i}}U{}_{i}U{}_{j}$,
and the number of terms $U_{k}U_{l}$ with which $U_{i}U_{j}$ has
non-zero covariance is $\left|D_{i}\cup D_{j}\right|\leq\sum_{k}R(i,k)+\sum_{k}R(i,k)$,
so $Var\left(\sum_{i=1}^{n}\sum_{j\in D_{i}}U{}_{i}U{}_{j}\right)\leq M\sum_{i,j}R(i,j)\sum_{k}R(i,k)$
for some finite $M$. Therefore $Var\left(\sum_{i=1}^{n}\sum_{j\in D_{i}}U{}_{i}U{}_{j}\right)=O\left(\sum_{i,j,k}R(i,j)R(i,k)\right)$.
$Var(\sum U_{i})^{2}=O\left(\left[\sum_{i,j}R(i,j)\right]^{2}\right)$,
so the second term is of smaller order than the first term. Therefore
we have only to consider the first term and we have completed the
proof. \end{proof}

\begin{lemma}[Bound goes to zero when $K_{i}\leq K_{max,n}$ for
all $i$]
\label{lem:lowerbound} 
Convergence to zero of the righthand
side of Equation (\ref{eq:stein}) is preserved under the removal
of ties and holds as long as $K_{i}\leq K_{max,n}$ for all $i$
and $\frac{K_{max}^{2}(n)}{n}$ converges to zero as $n$ goes to
infinity. \end{lemma}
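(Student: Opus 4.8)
The plan is to reduce Lemma~\ref{lem:lowerbound} to a single combinatorial estimate about the network and then to establish that estimate by comparison with a regular network. The starting point is that the bound \eqref{eq:to show} obtained inside the proof of Lemma~\ref{lem:Steins} --- that the right-hand side of \eqref{eq:stein} is $O\!\left(\sum_{i,j,k}R(i,j)R(i,k)\big/[\sum_{i,j}R(i,j)]^{3/2}\right)$ --- holds for an \emph{arbitrary} network; only the final evaluation of the ratio used the assumption of a common degree $K_{max}(n)$. Writing $d_i=|D_i|=\sum_j R(i,j)$, we have $\sum_{i,j,k}R(i,j)R(i,k)=\sum_{i=1}^n d_i^2$ and $\sum_{i,j}R(i,j)=\sum_{i=1}^n d_i$, so it suffices to show that
\[
\frac{\sum_{i=1}^n d_i^2}{\left(\sum_{i=1}^n d_i\right)^{3/2}}\;\longrightarrow\;0
\qquad\text{whenever }K_i\le K_{max}(n)\ \text{for all }i\ \text{and}\ K_{max}^2(n)/n\to 0 .
\]
Because $Z'_{nY}$ (conditionally on $\mathbf{X},\mathbf{C}$) and $Z_{nC}$ share the same dependency-neighbourhood structure $D_i$, the same estimate closes all three applications of Stein's method, and hence, via the orthogonal decomposition, the central-limit portion of Theorem~1.

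For the estimate itself I would first record the elementary facts $d_i\le 1+K_i+K_i(K_{max}(n)-1)\le 1+K_{max}^2(n)$ for every $i$ and $\sum_i d_i\ge n$ (since $i\in D_i$ always). Together these give
\[
\frac{\sum_i d_i^2}{\left(\sum_i d_i\right)^{3/2}}\le\frac{\max_i d_i}{\sqrt{\sum_i d_i}}\le\frac{1+K_{max}^2(n)}{\sqrt n},
\]
which already settles the case $K_{max}(n)=o(n^{1/4})$. To reach the full range $K_{max}^2(n)=o(n)$ one has to use that the numerator cannot be large unless the denominator is correspondingly large: any node $i$ with $d_i$ of order $K_{max}^2(n)$ necessarily has of order $K_{max}(n)$ neighbours of degree of order $K_{max}(n)$, and each of those neighbours, as well as each of the roughly $K_{max}^2(n)$ friends-of-friends of $i$, then contributes at least of order $K_{max}(n)$ to $\sum_i d_i$. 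The concrete route, following the overview, is to embed the given network in a $K_{max}(n)$-regular network $G^{+}$ on $n\,(1+o(1))$ nodes by adding ties --- for $G^{+}$ Lemma~\ref{lem:Steins} gives the ratio equal to $\sqrt{K_{max}^2(n)/n}$ --- and then to delete the added ties one at a time, showing that a single deletion lowers $\sum_i d_i$ by at most a fixed multiple of $K_{max}(n)$ and lowers $\sum_i d_i^2$ by at most a fixed multiple of $K_{max}^3(n)$, and that the ratio therefore stays $o(1)$ through all $O(n K_{max}(n))$ deletions.

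The main obstacle is precisely the behaviour of the ratio for highly clustered networks, and this is where the argument is delicate. Deleting a tie is \emph{not} monotone in the ratio: it shrinks both $\sum_i d_i^2$ and $\sum_i d_i$, and moving from a spread-out network to one whose edges are concentrated on a small dense cluster can actually \emph{increase} $\sum_i d_i^2/(\sum_i d_i)^{3/2}$. The deletion bookkeeping must therefore be organised so that the denominator's decrease is always dominated by the numerator's --- in effect one must rule out, or absorb into the error term, configurations in which about $K_{max}^2(n)$ nodes form a near-regular cluster of degree about $K_{max}(n)$ while the remaining nodes are nearly isolated. Controlling the ratio in that regime is the crux, and it is where any extra structural assumption on the network sequence (for instance a uniform bound on local clustering, or equivalently a controlled growth rate for $\sum_i|D_i|$, i.e.\ for $C_n$) would be used. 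Once the ratio is shown to vanish for every network with $K_i\le K_{max}(n)$ and $K_{max}^2(n)=o(n)$, the three applications of Stein's method yield $Z'_{nX},Z'_{nY},Z_{nC}\overset{d}{\to}N(0,1)$ and Lemma~\ref{lem:lowerbound} follows.
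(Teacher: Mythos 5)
Your reduction of the lemma to the combinatorial statement $\sum_i d_i^2/\bigl(\sum_i d_i\bigr)^{3/2}\to 0$, with $d_i=|D_i|=\sum_j R(i,j)$, is the right reading of the bound \eqref{eq:to show} from Lemma~\ref{lem:Steins}, and your overall strategy --- start from the $K_{max}(n)$-regular case and remove ties one at a time --- is exactly the paper's. But your proposal stops short of a proof: you yourself flag that deletion is not monotone in the ratio, you leave the dense-cluster configuration unresolved, and you end by appealing to ``extra structural assumptions'' that are not among the lemma's hypotheses. That is a genuine gap; as written the argument does not establish the stated conclusion for all networks with $K_i\le K_{max}(n)$ and $K_{max}^2(n)/n\to 0$.

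For comparison, the paper closes the deletion step with a one-line monotonicity claim: rendering a single pair $(l,m)$ independent removes the terms $2\sum_{k\in D_l\cup D_m}E\left|U'_lU'_mU'_k\right|$ from the first Stein term and changes nothing else, so $A'_n\le A_n$. This is valid only if the normalized variables $U'_i=U_i/\sqrt{\mathrm{var}(\sum_i U_i)}$ are held fixed across the deletion, i.e.\ only if the normalizing variance (combinatorially, $\sum_i d_i$ up to constants) is unaffected by removing the dependence. In general it is not: the denominator shrinks along with the numerator, which is precisely the non-monotonicity you identify. Your concern is substantive: a component of roughly $K_{max}^2(n)$ nodes each of degree roughly $K_{max}(n)$, with the remaining nodes isolated, has $\sum_i d_i^2\asymp K_{max}^6(n)+n$ and $\sum_i d_i\asymp K_{max}^4(n)+n$, so with $K_{max}(n)=n^{0.45}$ the ratio is of order one even though $K_{max}^2(n)/n\to 0$. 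So you have correctly located the crux of the lemma, but neither your proposal nor the paper's deletion bookkeeping actually disposes of it; completing the argument appears to require either a strengthened hypothesis (e.g.\ near-regularity of the degree sequence, or assuming $\sum_i d_i^2=o\bigl(\bigl(\sum_i d_i\bigr)^{3/2}\bigr)$ directly) or a genuinely different bound.
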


\begin{proof}[Proof of Lemma~\ref{lem:lowerbound}]
Consider a sequence of networks with $n$ going to infinity such that the righthand
side of Equation (\ref{eq:stein}) converges to $0$, i.e. 
\begin{eqnarray*}
\sum_{i=1}^{n}\sum_{j,k\in D_{i}}E\left|U'_{i}U'_{j}U'_{k}\right|+\sqrt{\frac{2}{\pi}}\sqrt{Var\left(\sum_{i=1}^{n}\sum_{j\in D_{i}}U'_{i}U'_{j}\right)} & \rightarrow & 0.
\end{eqnarray*}
Because the second term is of the same or smaller order than the first,
we only have to consider the first term. For this sequence of networks,
define $A_{n}=\sum_{i=1}^{n}\sum_{j,k\in D_{i}}E\left|U'_{i}U'_{j}U'_{k}\right|$
. Removing a single tie from the underlying network has the effect
of rendering independent some pairs that were previously dependent;
We now consider the effect of rendering a single dependent pair independent
but otherwise leaving the distributions of the random variables the
same. Suppose the pair rendered independent is $(l,m)$. Define a
new sequence of networks with $n$ going to infinity to be identical
to the previous sequence but with pair $(l,m)$ independent, and let
$A'_{n}$ be the first term in the righthand side of Equation (\ref{eq:stein})
for this new sequence. Then 
\begin{align*}
A'_{n} & =A_{n}-2\sum_{k\in D_{l}\cup D_{m}}E\left|U'_{l}U'_{m}U'_{k}\right|
\end{align*}
 which is bounded above by $A_{n}$.\end{proof}

This completes the proof that $Z_{nX}'$, $Z'_{nY}$, and $Z{}_{nC}$
have Normal limiting distributions.

\begin{lemma}[Conditional CLT implies marginal CLT]\label{lem:marginalCLT}$Z_{nX}'$
converges to Normal distribution after marginalizing over \textbf{$\mathbf{C}$
}(but conditioning on the network as captured by the adjacency matrix
\textbf{$\mathbf{A}$}) and $Z_{nY}'$ converges to Normal distribution
after marginalizing over $(\mathbf{X},\mathbf{C})$. That is, $Z_{nX}$
and $Z_{nY}$ both converge to Normal distributions.\end{lemma}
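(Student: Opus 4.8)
The plan is to derive the unconditional central limit theorems for $Z_{nX}$ and $Z_{nY}$ from the conditional ones already obtained in Lemmas~\ref{lem:Steins} and \ref{lem:lowerbound}, using the fact that the conditional limit is the \emph{same} $N(0,1)$ law for every conditioning value. The first step is to note that the Stein bound in Lemma~\ref{lem:Steins} is uniform over the conditioning event: it involves the network only through $K_{max,n}$, which is fixed once we condition on $\mathbf{A}$, and otherwise only the universal bound on the influence-function components, which holds uniformly in $\mathbf{O}$. Writing $B_n = O\!\left(\sqrt{K_{max,n}^{2}/n}\right)$ for that bound, the Wasserstein distance between the conditional law of $Z'_{nX}$ given $\mathbf{C}$ and $N(0,1)$ is at most $B_n$ with probability one, and likewise for $Z'_{nY}$ given $(\mathbf{X},\mathbf{C})$. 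Since $\bigl|E e^{iuX} - E e^{iuY}\bigr| \le |u|\,W_1(X,Y)$ (take an optimal coupling and use $|1-e^{i\theta}|\le|\theta|$), this gives $\bigl|E[e^{iuZ'_{nX}}\mid\mathbf{C}] - e^{-u^{2}/2}\bigr|\le |u|\,B_n$ almost surely, and the analogue for $Z'_{nY}$.

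Taking expectations over the conditioning variables and using that characteristic functions are bounded by $1$ (dominated convergence), $E[e^{iuZ'_{nX}}]\to e^{-u^{2}/2}$ for every $u$, so $Z'_{nX}$ converges in distribution to $N(0,1)$ unconditionally, and integrating over $(\mathbf{X},\mathbf{C})$ does the same for $Z'_{nY}$. It remains to pass from the conditionally standardised sums to $Z_{nX}$ and $Z_{nY}$. By construction of the orthogonal decomposition, $E\!\left[\sum_{i}f_{\mathbf{X},i}\mid\mathbf{C}\right]=0$, so the law of total variance gives $\sigma_{nX}^{2}=E\bigl[\sigma_{nX}^{2}(\mathbf{C})\bigr]$ and hence $Z_{nX}=r_n(\mathbf{C})\,Z'_{nX}$ with $r_n(\mathbf{C})=\sigma_{nX}(\mathbf{C})/\sigma_{nX}$; similarly $Z_{nY}=r_n(\mathbf{X},\mathbf{C})\,Z'_{nY}$ using $\sigma_{nY}^{2}=E[\sigma_{nY}^{2}(\mathbf{X},\mathbf{C})]$. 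If $r_n(\mathbf{C})\to 1$ in probability, then Slutsky's theorem yields $Z_{nX}\overset{d}{\to}N(0,1)$, and symmetrically for $Z_{nY}$.

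The main obstacle is therefore the concentration statement $r_n(\mathbf{C})\to_{P}1$, i.e. that the conditional variance $\sigma_{nX}^{2}(\mathbf{C})$ is close to its mean. The plan here is a variance-of-variance bound: write $\sigma_{nX}^{2}(\mathbf{C})=\sum_{(i,j):R(i,j)=1}\mathrm{Cov}\bigl(f_{\mathbf{X},i},f_{\mathbf{X},j}\mid\mathbf{C}\bigr)$, observe that each summand depends on $\mathbf{C}$ only through the covariates attached to $i$, $j$ and their alters, so two summands whose index neighbourhoods are sufficiently far apart in the network are independent under assumption~(A3); counting the $O\bigl(nK_{max,n}^{4}\bigr)$ pairs of summands that can be correlated and invoking the bounded-fourth-moment condition gives $\mathrm{Var}\bigl(\sigma_{nX}^{2}(\mathbf{C})\bigr)=O\bigl(nK_{max,n}^{4}\bigr)$, which is $o\bigl(\sigma_{nX}^{4}\bigr)$ as soon as $\sigma_{nX}^{2}$ is of its nominal order $\asymp nK_{max,n}^{2}$, and in any case under the running assumption $K_{max,n}^{2}=o(n)$ together with $\sigma_{nX}^{2}\gtrsim n$. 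Chebyshev then gives $r_n(\mathbf{C})^{2}\to_{P}1$, and the identical computation with $(\mathbf{X},\mathbf{C})$ in place of $\mathbf{C}$ handles $Z_{nY}$. A related point I would note in passing is that the uniform-in-$\mathbf{c}$ form of the Stein bound used in the first step relies on a uniform lower bound on $\sigma_{nX}^{2}(\mathbf{c})$, i.e. non-degeneracy of the conditional variance, which is part of the regularity conditions.
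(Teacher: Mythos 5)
Your proposal is correct and, for the central step, follows the same route as the paper: both arguments exploit the fact that the conditional limit law is the same $N(0,1)$ for (almost) every value of the conditioning variable and then integrate that variable out with a dominated-convergence argument (the paper works at the level of conditional CDFs $P(Z'_{nX}\le x\mid\mathbf{C}=\mathbf{c})$ with dominating function $g\equiv 1$; you work at the level of characteristic functions via a quantitative Wasserstein bound, which is equivalent for this purpose). Where you genuinely depart from the paper is in the final passage from $Z'_{nX}$ to $Z_{nX}$. The paper simply writes $P(Z_{nX}\le x)=\int P(Z'_{nX}\le x\mid\mathbf{C}=\mathbf{c})\,dP_{\mathbf{C}}(\mathbf{c})$, which tacitly identifies the conditional standardization $\sigma_{nX}(\mathbf{c})$ with the marginal one $\sigma_{nX}$; your decomposition $Z_{nX}=r_n(\mathbf{C})\,Z'_{nX}$ with $r_n(\mathbf{C})=\sigma_{nX}(\mathbf{C})/\sigma_{nX}$, followed by a concentration argument for the conditional variance and Slutsky, makes explicit (and repairs) the step the paper glosses over, and this is a real gain in rigor. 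Two caveats: the order-of-magnitude count $\mathrm{Var}\bigl(\sigma_{nX}^{2}(\mathbf{C})\bigr)=O\bigl(nK_{max,n}^{4}\bigr)$ is asserted rather than derived (a careful count of correlated covariance summands under (A3b) could give a different power of $K_{max,n}$), and the conclusion $r_n(\mathbf{C})\to 1$ in probability requires a lower bound on $\sigma_{nX}^{2}$ of the nominal order, which is not among the paper's stated regularity conditions; both points would need to be pinned down to make your version fully airtight, but the architecture is sound and strictly more informative than the paper's treatment.
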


\begin{proof}[Proof of Lemma~\ref{lem:marginalCLT}] For illustration
consider $Z_{nX}'=\sum_{i=1}^{n}Z_{2,i}'$, where 
\[
Z_{X,i}'=\left(f_{\mathbf{X},i}(\mathbf{X},\mathbf{C})\left|\mathbf{C}\right.\right)/\sqrt{\sigma_{nX}^{2}(\mathbf{\mathbf{C}})}
\]
and note that the proof of the convergence of $Z_{nY}$ is nearly
identical. The conditional CLT results from Lemma \ref{lem:Steins}
show that 
\[
P\left[Z_{nX}'\leq x\left|\mathbf{C}=\mathbf{c}\right.\right]=P\left[\left(\sum_{i=1}^{N}\dfrac{f_{\mathbf{X},i}(\mathbf{X},\mathbf{c})}{\sqrt{\sigma_{nX}^{2}(\mathbf{c})}}\leq x\right)\left|\mathbf{C}=\mathbf{c}\right.\right]
\]
converges to $\Phi(x)$ for each $x$ and almost every $\mathbf{c}$,
where $\Phi$ is the cumulative distribution function of the standard
Normal random variable and $\mathbf{C}$ is a given sequence $(C_{i}:i=1,\ldots,n)$.
Let $P_{\mathbf{C}}$ denote the distribution of $\mathbf{C}$. Then
\begin{align*}
P(Z_{nX}\leq x) & \equiv P\left[\left(\sum_{i=1}^{N}\dfrac{f_{\mathbf{X},i}(\mathbf{X},\mathbf{C})}{\sqrt{\sigma_{nX}^{2}}}\leq x\right)\right]\\
 & =\int_{\mathbf{c}}P(Z_{nX}'\leq x|\mathbf{C}=\mathbf{c})dP_{\mathbf{C}}(\mathbf{c}).
\end{align*}

For a given $x$, the dominated convergence theorem is now applied
with $f_{n}(\mathbf{c})=P(Z_{nX}'\leq x|\mathbf{C}=\mathbf{c})$ and
the limit given by $f(\mathbf{c})=\Phi(x)=m$, where $m$ is some
constant that doesn't depend on $\mathbf{c}$. From the previous conditional
CLT result it follows that $f_{n}(\mathbf{c})$ converges to $f(\mathbf{c})$
pointwise for each $\mathbf{c}$. The next step is to find an integrable
function $g$, such that $f_{n}<g$ and $\int g(\mathbf{c})dP_{\mathbf{C}}(\mathbf{c})<\infty$.
The proof is then completed by choosing $g=1$. \end{proof}

We have now shown that $Z_{nY}$, $Z_{nX}$, and $Z_{nC}$ are asymptotically
normally distributed. We now show that the sum of the three processes
converges in distribution to a Normal random variable. Consider three
cases: (1) the three processes have the same rate of marginal convergence
in distribution, (2) one of the three processes converges faster than
the other two, and (3) two of the processes converge faster than the
third. In all three cases the rate of convergence for the sum will
be the slowest of the three marginal rates. In case (3), the limiting
distribution of the sum is determined entirely by the one process
that converges with a slower rate than the other two: the other two
processes will converge to constants (specifically to their expected
values of 0) when standardized by the slower rate; Slutsky's theorem
concludes the proof. We focus on case (1) below; case (2) follows
immediately by applying the proof below to the two processes that
converge at the same slower rate and applying Slutsky's to the third,
faster converging process.

For convenience, in order to show that the sum of the three dependent
processes also converges to Normal, define 
\[
C_{n}^{*}:=\sigma_{nY}^{2}+\sigma_{nX}^{2}+\sigma_{nC}^{2}.
\]
Note that $C_{n}^{*}$ is related to $C_{n}$ as follows: $C_{n}=O(n^{2}/C_{n}^{*})$.

\begin{lemma}[CLT for the sum of the three orthogonal processes]\label{lem:sumCLT}If
all three processes have the same marginal rate of convergence, then
\[
\dfrac{1}{\sqrt{C_{n}^{*}}}\left(f_{\mathbf{Y}}(\mathbf{Y},\mathbf{\mathbf{X}},\mathbf{C})+f_{\mathbf{\mathbf{X}}}(\mathbf{\mathbf{X},}\mathbf{C})+f_{\mathbf{C}}(\mathbf{C})\right)\rightarrow N(0,1).
\]
\end{lemma}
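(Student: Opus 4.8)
The plan is to prove convergence of characteristic functions by peeling off the three orthogonal pieces one at a time, exploiting the nested conditional-mean-zero structure built into the decomposition: $E[f_{\mathbf{Y},i}(\mathbf{O})\mid\mathbf{X},\mathbf{C}]=0$ and $E[f_{\mathbf{X},i}(\mathbf{O})\mid\mathbf{C}]=0$. Write $S_n:=(f_{\mathbf{Y}}+f_{\mathbf{X}}+f_{\mathbf{C}})/\sqrt{C_n^*}$. Since $f_{\mathbf{X}}$ is $(\mathbf{X},\mathbf{C})$-measurable and $f_{\mathbf{C}}$ is $\mathbf{C}$-measurable, conditioning first on $(\mathbf{X},\mathbf{C})$ gives
\[
E\big[e^{\mathrm{i}tS_n}\big]=E\Big[e^{\mathrm{i}t(f_{\mathbf{X}}+f_{\mathbf{C}})/\sqrt{C_n^*}}\;\phi_n^{\mathbf{Y}}(t)\Big],\qquad \phi_n^{\mathbf{Y}}(t):=E\big[e^{\mathrm{i}t f_{\mathbf{Y}}/\sqrt{C_n^*}}\mid\mathbf{X},\mathbf{C}\big].
\]

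Three ingredients go into controlling $\phi_n^{\mathbf{Y}}(t)$. First, the Stein-method bounds of Lemmas~\ref{lem:Steins}--\ref{lem:lowerbound}, applied conditionally on $(\mathbf{X},\mathbf{C})$ to the bounded array $\{f_{\mathbf{Y},i}\}$ with dependency neighborhoods $D_i$, bound the Wasserstein distance between the conditional law of $f_{\mathbf{Y}}/\sqrt{\sigma_{nY}^2(\mathbf{X},\mathbf{C})}$ and $N(0,1)$ by a quantity of order $K_{max,n}/\sqrt n$ whose constant does not depend on the conditioning value---this is the conditional CLT for $Z_{nY}'$ already established, now with a conditioning-uniform rate. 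Second, because $\sigma_{nY}^2+\sigma_{nX}^2+\sigma_{nC}^2=C_n^*$ and the three pieces share a common order in case~(1), one may pass to a subsequence along which $\sigma_{nY}^2/C_n^*\to\lambda_Y^2$, $\sigma_{nX}^2/C_n^*\to\lambda_X^2$, $\sigma_{nC}^2/C_n^*\to\lambda_C^2$ with $\lambda_Y^2+\lambda_X^2+\lambda_C^2=1$; it suffices to identify the subsequential limit of $\mathcal L(S_n)$ as $N(0,1)$. Third---the one genuinely new estimate---one must prove \emph{conditional-variance stabilization}: $\sigma_{nY}^2(\mathbf{X},\mathbf{C})/\sigma_{nY}^2\to1$ and $\sigma_{nX}^2(\mathbf{C})/\sigma_{nX}^2\to1$ in probability. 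Here $\sigma_{nY}^2(\mathbf{X},\mathbf{C})=\sum_i\sum_{j\in D_i}\mathrm{Cov}(f_{\mathbf{Y},i},f_{\mathbf{Y},j}\mid\mathbf{X},\mathbf{C})$ is a sum of $O(\sum_{i,j}R(i,j))$ uniformly bounded terms, each depending on $(\mathbf{X},\mathbf{C})$ only through coordinates in a fixed-radius graph neighborhood of $i$; a second-moment computation of exactly the combinatorial flavor used in the proof of Lemma~\ref{lem:Steins}---counting overlapping friend-of-friend groups---bounds $\mathrm{Var}\big(\sigma_{nY}^2(\mathbf{X},\mathbf{C})\big)$ by a term of smaller order than $(\sigma_{nY}^2)^2$ under the running hypothesis $K_{max,n}^2/n\to0$, and Chebyshev finishes. (For a piece whose $\lambda^2=0$, stabilization is unnecessary: divided by $\sqrt{C_n^*}$ it tends to $0$ in $L^2$ and is absorbed by Slutsky's theorem.)

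Granting these, the assembly is routine. By the first and third ingredients, $f_{\mathbf{Y}}/\sqrt{C_n^*}=\big(f_{\mathbf{Y}}/\sqrt{\sigma_{nY}^2(\mathbf{X},\mathbf{C})}\big)\cdot\sqrt{\sigma_{nY}^2(\mathbf{X},\mathbf{C})/C_n^*}$ converges conditionally on $(\mathbf{X},\mathbf{C})$ to $N(0,\lambda_Y^2)$, so $\phi_n^{\mathbf{Y}}(t)\to e^{-\lambda_Y^2t^2/2}$ in probability; since $|\phi_n^{\mathbf{Y}}(t)-e^{-\lambda_Y^2t^2/2}|\le2$ and $|e^{\mathrm{i}t(f_{\mathbf{X}}+f_{\mathbf{C}})/\sqrt{C_n^*}}|=1$, bounded convergence gives $E[e^{\mathrm{i}tS_n}]=e^{-\lambda_Y^2t^2/2}\,E[e^{\mathrm{i}t(f_{\mathbf{X}}+f_{\mathbf{C}})/\sqrt{C_n^*}}]+o(1)$. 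Repeating one level down---condition on $\mathbf{C}$, use the conditional CLT for $Z_{nX}'$ and stabilization of $\sigma_{nX}^2(\mathbf{C})$---peels off a factor $e^{-\lambda_X^2t^2/2}$, and the leftover $E[e^{\mathrm{i}t f_{\mathbf{C}}/\sqrt{C_n^*}}]\to e^{-\lambda_C^2t^2/2}$ is the marginal CLT for $Z_{nC}$ already in hand. Multiplying, $E[e^{\mathrm{i}tS_n}]\to e^{-(\lambda_Y^2+\lambda_X^2+\lambda_C^2)t^2/2}=e^{-t^2/2}$ for every $t$, and L\'evy's continuity theorem finishes case~(1); cases~(2) and (3) follow by applying this argument to the slow piece(s) and Slutsky's theorem to the fast one(s).

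The main obstacle is the conditional-variance-stabilization estimate---confirming that the conditional variances concentrate around their expectations fast enough. That is the only step where the network topology re-enters, and hence the only step that genuinely leans on $K_{max,n}^2/n\to0$; everything else is bookkeeping with characteristic functions, whose boundedness makes each dominated-convergence passage immediate.
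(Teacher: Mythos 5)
Your proof is essentially correct but takes a genuinely different route from the paper's. The paper proves the joint convergence of the vector $(Z_{nX},Z_{nC})$ to a pair of independent standard normals by factoring the joint CDF as $P(Z_{nX}\le x_{1}\mid Z_{nC}\le x_{2})P(Z_{nC}\le x_{2})$, writing the conditional probability as a mixture over $\mathbf{C}=\mathbf{c}$ of the conditional CLT limits (which do not depend on $\mathbf{c}$), passing to the limit by dominated convergence, and then invoking Cram\'er--Wold; you instead peel the three pieces off one at a time with characteristic functions and the tower property, exploiting $E[f_{\mathbf{Y},i}\mid\mathbf{X},\mathbf{C}]=0$ and $E[f_{\mathbf{X},i}\mid\mathbf{C}]=0$. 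The substantive difference is that you explicitly isolate the conditional-variance stabilization step, $\sigma_{nY}^{2}(\mathbf{X},\mathbf{C})/\sigma_{nY}^{2}\to1$ and $\sigma_{nX}^{2}(\mathbf{C})/\sigma_{nX}^{2}\to1$ in probability. The paper needs exactly this fact too but leaves it implicit: its Lemma 3 equates the event $\{Z_{nX}\le x\}$ (normalized by the marginal variance $\sigma_{nX}^{2}$) with the conditional event defining $Z_{nX}'$ (normalized by $\sigma_{nX}^{2}(\mathbf{c})$), which is only legitimate if the conditional variance concentrates on the marginal one. So your treatment is the more careful of the two, and your subsequence argument for the proportions $\lambda_{Y}^{2},\lambda_{X}^{2},\lambda_{C}^{2}$ also handles possibly non-converging variance ratios that the paper does not discuss. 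The one caveat is that your stabilization estimate is only sketched: the second-moment bound on $\mathrm{Var}\bigl(\sigma_{nY}^{2}(\mathbf{X},\mathbf{C})\bigr)$ involves counting overlaps of friend-of-friend neighborhoods of pairs, and whether the resulting ratio is $o(1)$ under $K_{max,n}^{2}/n\to0$ alone (as opposed to a stronger condition such as $K_{max,n}^{4}/n\to0$) depends on a lower bound for $\sigma_{nY}^{2}$ of order $\sum_{i,j}R(i,j)$ that you should state explicitly; since the paper's own argument tacitly requires the same concentration, this is a shared rather than a new gap.
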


\begin{proof}[Proof of Lemma~\ref{lem:sumCLT}] Without the
loss of generality, we prove that $Z_{nX}+Z_{nC}\rightarrow N(0,2)$
and note that the general result for $(Z_{nY}+Z_{nX}+Z_{nC})$ follows
by applying a similar set of arguments.

Consider the following random vector $(Z_{nX},Z_{nC})$ taking values
in $\openr^{2}$. Let $F_{n}(x_{1},x_{2})\equiv P(Z_{nX}\leq x_{1},Z_{nC}\leq x_{2})$,
where $(x_{1},x_{2})\in\openr^{2}$. Let $\Phi^{2}(x_{1},x_{2})\equiv P(Z_{X}\leq x_{1})P(Z_{C}\leq x_{2})$,
for $Z_{X}\sim N(0,1)$ and $Z_{C}\sim N(0,1)$, that is, $\Phi^{2}(x_{1},x_{2})$
defines the CDF of the bivariate standard normal distribution, for
$(x_{1},x_{2})\in\openr^{2}$. The goal is to show that $F_{n}(x_{1},x_{2})\rightarrow\Phi^{2}(x_{1},x_{2})$,
for any $(x_{1},x_{2})\in\openr^{2}$. The convergence in distribution
for $Z_{nX}+Z_{nC}$ will follow by applying the Cramer and Wold Theorem
(1936).

Note that 
\begin{align*}
 & P(Z_{nX}\leq x_{1},Z_{nC}\leq x_{2})\\
= & P(Z_{nX}\leq x_{1}\left|Z_{nC}\leq x_{2}\right.)P(Z_{nC}\leq x_{2}).
\end{align*}
First, from the previous application of Stein's method, we have that
\[
P(Z_{nC}\leq x_{2})\rightarrow\Phi(x_{2}),
\]
where $\Phi(x_{2})\equiv P(Z_{C}\leq x_{2})$, $Z_{C}\sim N(0,1)$
and $x_{2}\in\openr^{2}$. Also note that 
\begin{eqnarray*}
 &  & P(Z_{nX}\leq x_{1}\left|Z_{nC}\leq x_{2}\right.)\\
 & = & \sum_{\mathbf{c}\in\mathcal{C}}P(Z_{nX}\leq x_{1}\left|\mathbf{C}=\mathbf{c}\right.)P(\mathbf{C}=\mathbf{c}\left|Z_{nC}\leq x_{2}\right.),
\end{eqnarray*}
where $\mathcal{C}$ denotes the support of $\mathbf{C}$, $Z_{nX}=\frac{1}{\sqrt{C_{n}^{*}}}f_{\mathbf{X}}(\mathbf{\mathbf{X},}\mathbf{C})$,
$Z_{nC}=\frac{1}{\sqrt{C_{n}^{*}}}f_{\mathbf{C}}(\mathbf{C})$ and
\[
P(\mathbf{C}=\mathbf{c}\left|Z_{nC}\leq x_{2}\right.)=\dfrac{P(\mathbf{C}=\mathbf{c})I(\left(1/\sqrt{C_{n}^{*}}\right)f_{\mathbf{C}}(\mathbf{c})\leq x_{2})}{P(\left(1/\sqrt{C_{n}^{*}}\right)f_{\mathbf{C}}(\mathbf{c})\leq x_{2})}.
\]
By another application of Stein's method, it was shown that 
\[
P(Z_{nX}\leq x_{1}\left|\mathbf{C}=\mathbf{c}\right.)\rightarrow\Phi(x_{2}),
\]
for any realization of $\mathbf{c}\in\mathcal{C}$. That is, we've
shown that the limiting distribution of $Z_{nX}$ conditional on $\mathbf{C=\mathbf{c}}$,
does not itself depend on the conditioning event $\mathbf{C}=\mathbf{c}$.
Applying Lemma 3, we finally conclude that $F_{n}(x_{1},x_{2})\rightarrow\Phi^{2}(x_{1},x_{2})$,
for any $(x_{1},x_{2})\in\openr^{2}$ and the result follows.\end{proof}

\section{Variance estimation}

In principle, the estimate of the variance of the $\hat{\psi}$ can always be obtained by the plug-in estimator of 
\[
\frac{1}{n^{2}}\sum_{i,j}E(f_{i}(\mathbf{O})f_{j}(\mathbf{O})).
\]
Alternatively, recalling the orthogonal decomposition used in the proof of Theorem 1, a variance estimate can be obtained
from the sum, scaled by $1/n^{2}$, of the three plug-in estimators of 
\begin{eqnarray*}
\sigma_{nY}^{2} & = & \sum_{i,j}E(f_{\mathbf{Y},i}(\mathbf{O})f_{\mathbf{Y},j}(\mathbf{O}))\\
\sigma_{nX}^{2} & = & \sum_{i,j}E(f_{\mathbf{X},i}(\mathbf{O})f_{\mathbf{X},j}(\mathbf{O}))\\
\sigma_{nC}^{2} & = & \sum_{i,j}E(f_{\mathbf{C},i}(\mathbf{O})f_{\mathbf{C},j}(\mathbf{O})).
\end{eqnarray*}

Note that contribution to these variances of any pair $i,j$ not in each others dependency neighborhoods will be $0$. Therefore, it is acceptable to sum only over pairs $i,j$ sharing a tie or a mutual contact in the underlying network. 

When dependence is due to direct transmission, that is under assumptions (A1), (A4), and (A5), $C$ is iid and $\sigma_{nC}^{2}$ can be estimated with the empirical distribution of $C$ and, under correct specification of both $m(\cdot)$ and $g(\cdot)$, the plug-in estimator is consistent:
\[
\frac{C_n}{n^{2}}\sum_{i,j}(\hat f_{i}(\mathbf{O}) \hat f_{j}(\mathbf{O})) \xrightarrow{p} \sigma^2. 
\]
This is similar to results for other doubly-robust estimators, where standard analytic variance estimators are generally consistent only under correct specification of both nuisance functionals. That the plug-in variance estimator $\frac{1}{n^{2}}\sum_{i,j}E(\hat f_{i}(\mathbf{O}) \hat f_{j}(\mathbf{O}))$  is unbiased for $\frac{1}{n^{2}}\sum_{i,j}E(f_{i}(\mathbf{O})f_{j}(\mathbf{O}))$ follows immediately from the assumption of correct specification. In the proof of Lemma 1 we demonstrated that  $var(\sum_{i,j}f_{\mathbf{V},i}(\mathbf{O})f_{\mathbf{V},j}(\mathbf{O}))=O(K_{max,n}/\sqrt{n})$ for $\mathbf{V}=\mathbf{C},\mathbf{X},$ or $\mathbf{Y}$; therefore each of these variances converges in probability to $0$ once scaled by $\frac{C_n}{n^{2}}$ and the proof of consistency is complete.

Note that we do not need
to know the true rate of convergence $\sqrt{C_{n}}$ to obtain a valid
estimate of the C.I. for $\psi$ in finite samples; this rate is captured by the number
of non-zero terms in the variance sums. That is, in finite samples we rely on the approximation of the distribution of $\hat \psi$ as $N(\psi,  \frac{1}{n^{2}}\sum_{i,j}E(\hat f_{i}(\mathbf{O}) \hat f_{j}(\mathbf{O})))$, which does not rely on explicit knowledge of $C_n$.

However, when latent variable dependence is present, that is under assumptions (A1) through (A3), no nonparametric estimator is available for $\sigma_{nC}^{2}$, and therefore the plug-in variance estimator would require a model for $p_C(\mathbf{c})$ in addition to $m(\cdot)$ and $g(\cdot)$. If such a model can be correctly specified then the results above hold and the plug-in estimator is consistent. Consistent variance esitmation is also available for the conditional estimand, for which $\frac{1}{n^{2}}\sum_{i,j}E(f_{i}(\mathbf{O})f_{j}(\mathbf{O}))$ depends only on $m(\cdot)$ and $g(\cdot)$. 

\part{Simulations}

All simulation and estimation was carried out in $\mathbf{R}$ language
\citep{r} with packages \emph{$\mathtt{simcausal}$} \citep{R-simcausal}
and \emph{$\mathtt{tmlenet}$} \citep{R-tmlenet}. The full $\mathbf{R}$
code for this simulation study is available in a separate github repository (\verb!github.com/osofr/Ogburn_etal_simulations!). 
{\citet{sofryginTechreport, sofrygin2017conducting, sofrygin2018single} provide additional details on implementation, computation, and simulations for asymptotic regimes with a bounded number of ties per node and with no latent variable dependence.

The simulations were repeated for community sizes of $n=500$, $n=1,000$
and $n=10,000$. The estimation was repeated by sampling $1,000$
such datasets, conditional on the same network (sampled only once
for each sample size). For the simulations with dependence due to
direct transmission,\textbf{ }the baseline covariates were independently
and identically distributed. The probability of success for each $Y_{i}$
was a logit-linear function of $i$'s exposure $X_{i}$ (indicator
of receiving the economic incentive), the baseline covariates $C_{i}$
and the three summary measures of $i$'s friends exposures and baseline
covariates. In particular, we also assumed that the probability of
maintaining gym membership increased on a logit-linear scale as a
function of the following network summaries: the total number of $i$'s
friends who were exposed $(\sum_{j:A_{ij}=1}X_{j})$, the total number
of $i$'s friends who were physically active at baseline $(\sum_{j:A_{ij}=1}PA{}_{j})$
and the product of the two summaries $(\sum_{j:A_{ij}=1}X{}_{j}\times\sum_{j:A_{ij}=1}PA{}_{j})$. The summary measures and the outcome regression model were correctly specified, but we do not know (and therefore did not a priori correctly specify a model for) the true density of $h$.
The economic incentive to attend local gym had a small direct effect
on each individual who was not physically active at baseline and no
direct effect on those who were already physically active. However,
physically active individuals were more likely to maintain gym membership
over the follow-up period if they had at least one physically active
friend at baseline. We repeated these simulations with the addition
of latent variable dependence, which we introduced by generating unobserved
latent variables for each node which affected the node's own outcome
as well as the outcomes of its friends.

In addition to the preferential attachment network model with both latent variable dependence and dependence due to direct transmission (results in main text), we also simulated under dependence due to
direct transmission only. We estimated the marginal parameter $E\left[\bar{Y}^{*}_{n}\right]$
and compared three different estimators of the asymptotic variance
and the coverage of the corresponding confidence intervals. First,
we looked at the naive plug-in i.i.d. estimator (``\emph{IID Var}'')
for the variance of the influence curve which treated observations
as if they were i.i.d. Second, we used the plug-in variance estimator
based on the efficient influence curve which adjusted for the correlated
observations (``\emph{dependent IC Var}'') \citep{sofryginTechreport}. Finally, we used the parametric bootstrap
variance estimator (``\emph{bootstrap Var}'') described in Section 3.5. The simulation results showing the mean length and coverage of these three CI types
are shown in Figure \ref{fig:CIres.EY.prefattach}.

\begin{figure} [h]
\begin{centering}
\includegraphics[scale=0.7]{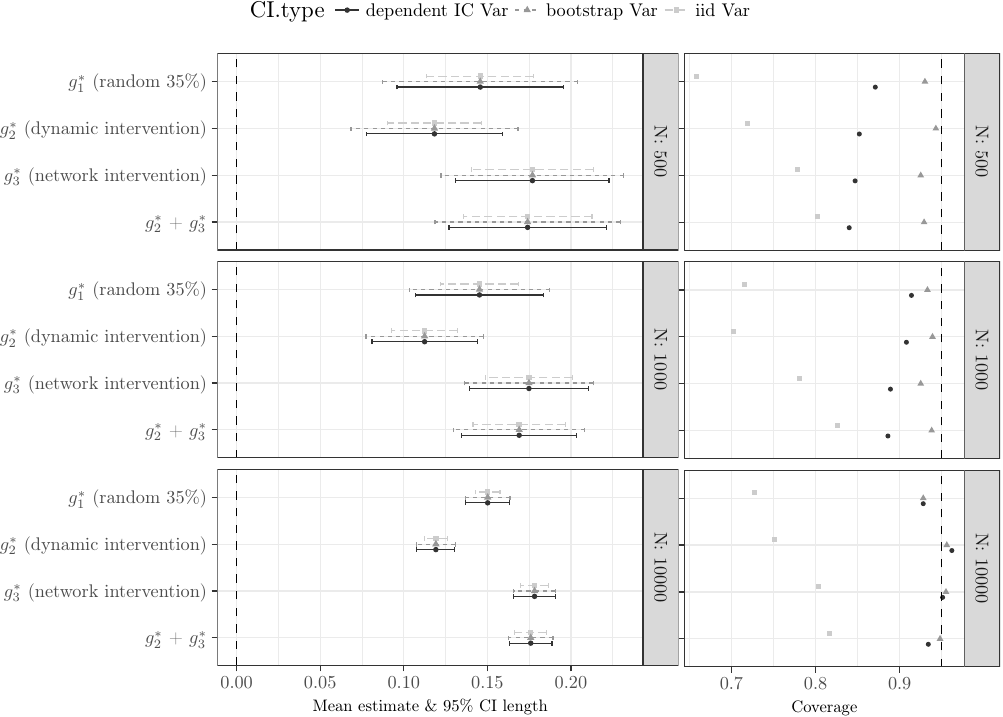} 
\par\end{centering}

\protect\protect\caption{Mean 95\% CI length (left panel) and coverage (right panel) for the
TMLE in preferential attachment network with dependence due to direct
transmission, by sample size, intervention and CI type. \label{fig:CIres.EY.prefattach}}
\end{figure}

Results from simulations with dependence due to direct transmission
show that conducting inference while ignoring the nature of the dependence
in such datasets generally results in anticonservative variance estimates
and under-coverage of CIs, which can be as low as 50\% even for very
large sample sizes (``\emph{IID Var}'' in the right panel of Figure
\ref{fig:CIres.EY.prefattach}). The
CIs based on the dependent variance estimates (``\emph{dependent
IC Var}'') obtain nearly nominal
coverage of 95\% for large enough sample sizes, but can suffer in
smaller sample sizes due to lack of asymptotic normality and near-positivity
violations. Notably, the CIs based on the parametric bootstrap variance
estimates provide the most robust coverage for smaller sample sizes,
while attaining the nominal 95\% coverage in large sample sizes for
nearly all of the simulation scenarios (``\emph{bootstrap Var}''). The apparent robustness of the parametric bootstrap method for inference in small sample sizes,
even as low as $n=500$, was one of the surprising finding of this
simulation study. Future work will explore the assumptions under which
this parametric bootstrap works and its sensitivity towards violations
of those assumptions.

We also simulated social networks from the small world network model \citep{watts1998collective} with a rewiring probability of $0.1$. The results of these simulations are in Figures \ref{fig:CIres.EY.smwld} and \ref{fig:CIres.EY.smwld.COND.DEP.Y}.

\begin{figure}
\begin{centering}
\includegraphics[scale=0.8]{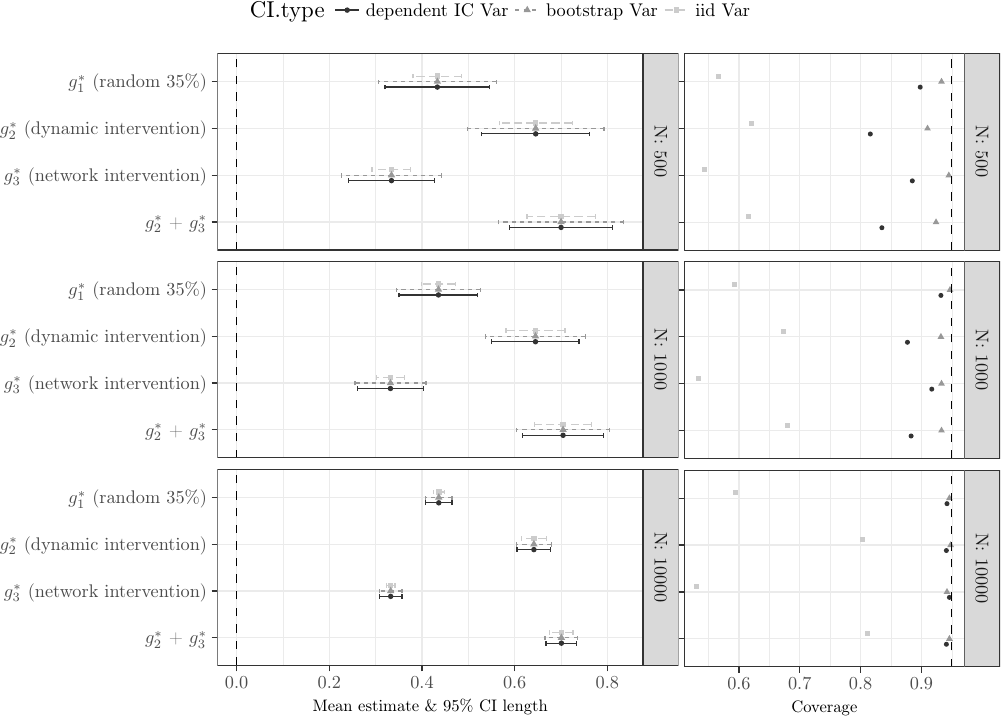} 
\par\end{centering}

\protect\protect\caption{Mean 95\% CI length (left panel) and coverage (right panel) for the
TMLE in small world network\textbf{ }with dependence due to direct
transmission, by sample size, intervention and CI type. Results are
shown for the estimates of the average expected outcome under four
hypothetical interventions ($g_{1}^{*}$, $g_{2}^{*}$, $g_{3}^{*}$
and $g_{2}^{*}+g_{3}^{*}$). \label{fig:CIres.EY.smwld}}
\end{figure}

\begin{figure}
\begin{centering}
\includegraphics[scale=0.8]{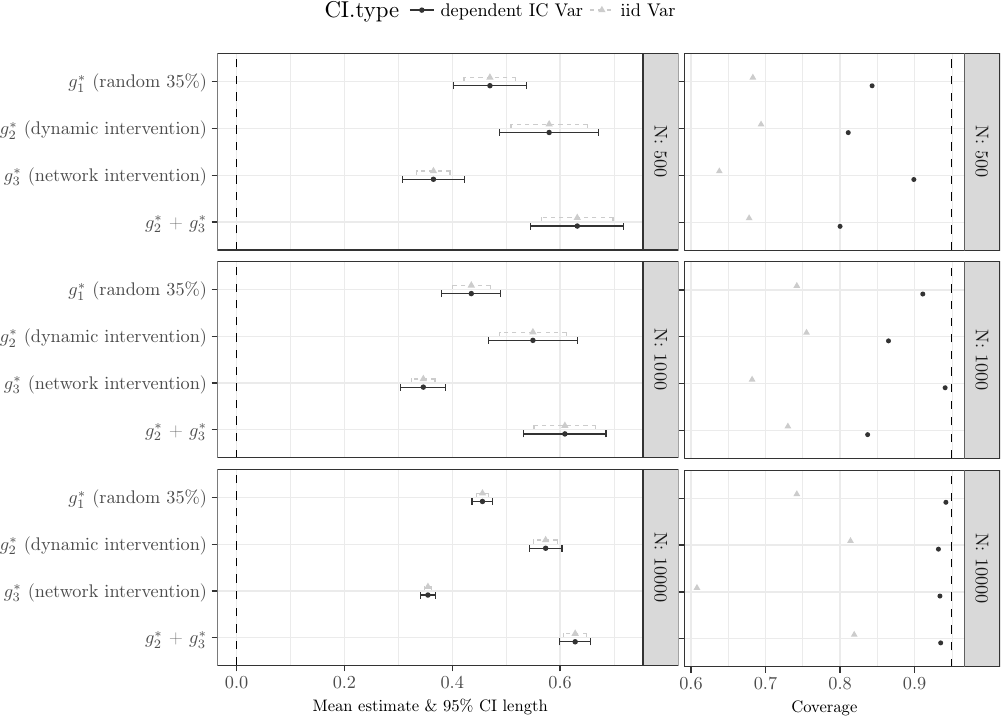} 
\par\end{centering}

\protect\protect\caption{Mean 95\% CI length (left panel) and coverage (right panel) for the
TMLE in small world network with latent variable dependence, by sample
size, intervention and CI type. Results are shown for the estimates
of the average expected outcome under four hypothetical interventions
($g_{1}^{*}$, $g_{2}^{*}$, $g_{3}^{*}$ and $g_{2}^{*}+g_{3}^{*}$).
\label{fig:CIres.EY.smwld.COND.DEP.Y}}
\end{figure}

We examined the empirical distribution of the transformed TMLEs, comparing
their histogram estimates to the predicted normal limiting distribution,
with the results shown in Figure \ref{fig:hist.TMLE.EY}, where the
histogram plots are displayed by sample size (horizontal axis) and
the intervention type (vertical axis). The estimates were first centered
at the corresponding true parameter values and then re-scaled by their
corresponding true standard deviation (SD). We note that our results
indicate that the estimators converge to their normal theoretical
limiting distribution, even in networks with power law node degree
distribution, such as the preferential attachment network model, as
well as in the densely connected networks obtained under the small
world network model. The results shown in Figure \ref{fig:hist.TMLE.EY}
were generated from simulations with dependence due to direct transmission;
simulations with latent variable dependence (not shown) evinced similar
approximate normality.

\begin{figure}
\centering{}\includegraphics[scale=0.25]{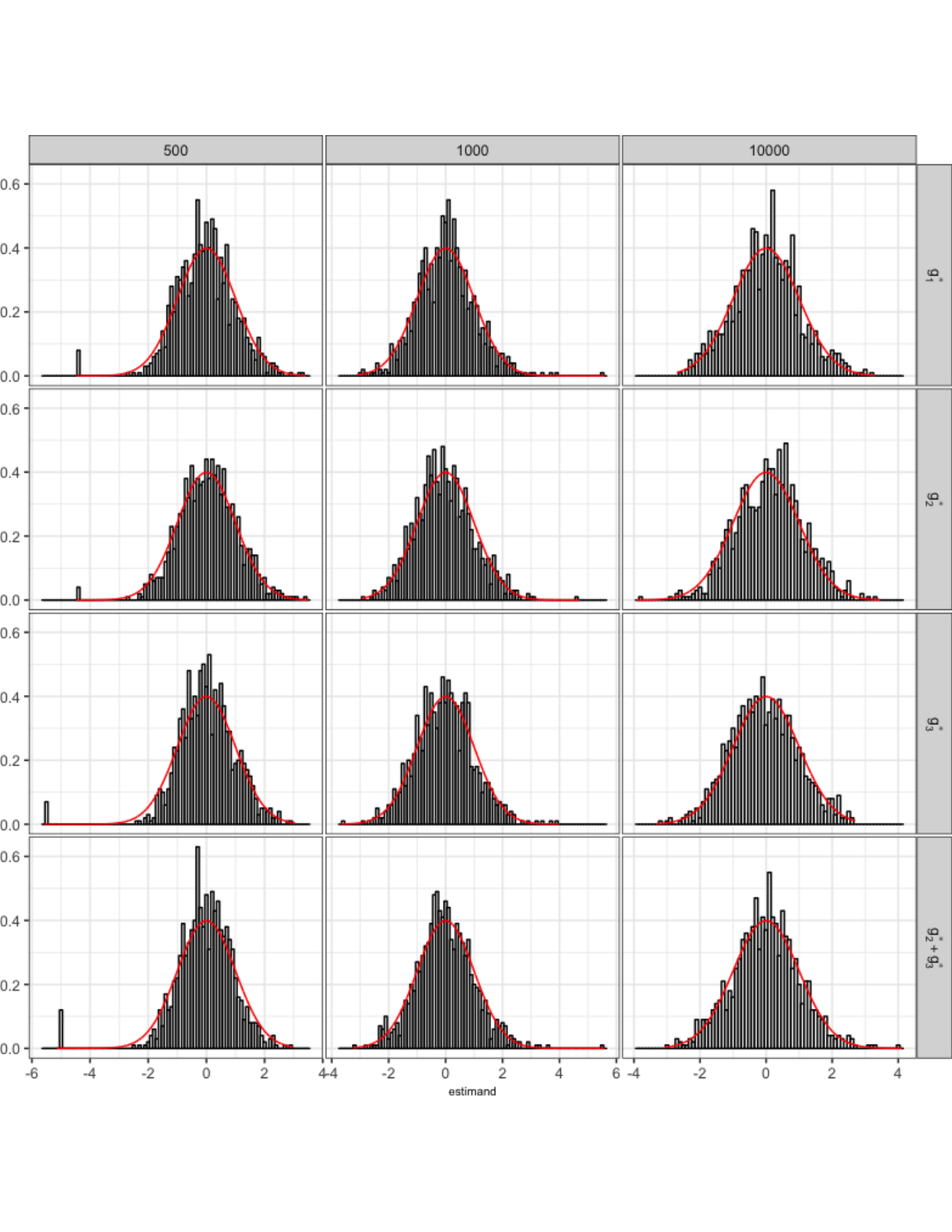}\includegraphics[scale=0.25]{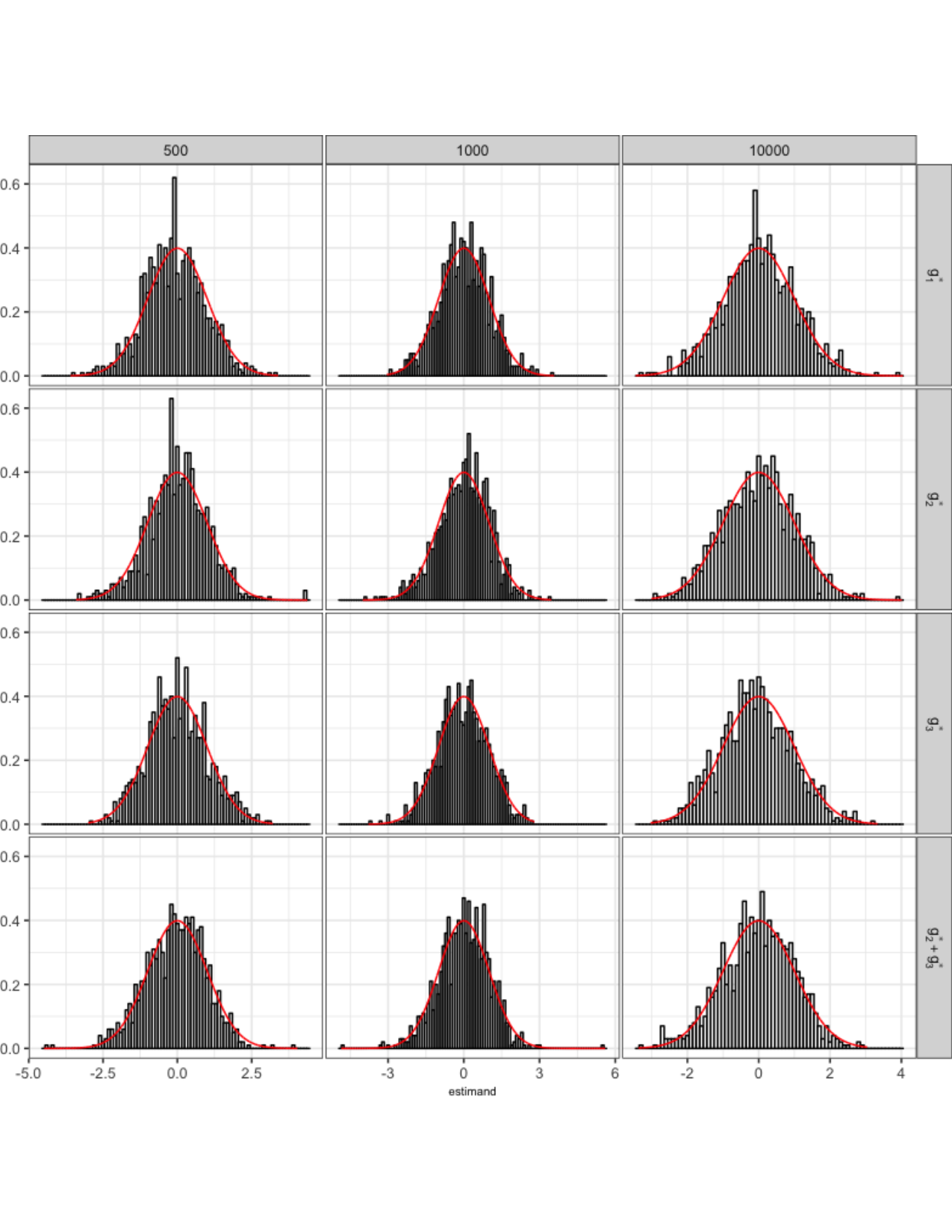}\protect\protect\caption{Comparing re-scaled empirical TMLE distributions (black) to their
theoretical normal limit (red) with varying sample size (x-axis) and
intervention type (y-axis). TMLEs were centered at the truth and then
re-scaled by true SD. Results shown for the preferential attachment
network (left) and the small world network (right).\label{fig:hist.TMLE.EY}}
\end{figure}

\part{ Glossary of Notation}

$\mathbf{A}$ with entries $A_{ij}\equiv I\left\{ \mbox{subjects }i\mbox{ and }j\mbox{ share a tie}\right\} $ is the adjacency matrix for the network.

\vspace{3mm}

\noindent $K_{i}=\sum_{j=1}^{n}A_{ij}$, that is, $K_{i}$
is the degree of node $i$, or the number of individuals sharing a tie
with individual $i$. 

\vspace{3mm}

\noindent 
$F_{i}={j:A_{ij}=1}$ is the set of nodes with with node $i$ shares a tie (node $i$'s "friends").

\vspace{3mm}

\noindent 
$C_{i}$ is covariates

\vspace{3mm}

\noindent 
$X_{i}$ is exposure

\vspace{3mm}

\noindent 
$Y_{i}$ is outcome

\vspace{3mm}

\noindent 
$s_{C}$ is a summary function of $\mathbf{C}$ upon which $X$ and $Y$ depend.

\vspace{3mm}

\noindent 
$s_{X}$ is a summary function of $\mathbf{X}$ upon which $Y$ depends.

\vspace{3mm}

\noindent 
$W_{i}=s_{C,i}\left(\left\{ C_{j}:A_{ij}=1\right\} \right)$  

\vspace{3mm}

\noindent 
$V_{i}=s_{X,i}\left(\left\{ X_{j}:A_{ij}=1\right\} \right)$

\vspace{3mm}

\noindent 
$O_{i}=(C_{i},W_{i},X_{i},V_{i},Y_{i})$

\vspace{3mm}

\noindent 
$x_{i}^{*}$ represents a user-specified intervention value of $X_{i}$.

\vspace{3mm}

\noindent 
$Y_{i}(\mathbf{x}^{*})$, shorthand $Y_{i}^{*}$, denotes the potential or counterfactual outcome of individual $i$ in a hypothetical
world in which $P(\mathbf{X}=\mathbf{x}^{*})=1$. 

\vspace{3mm}

\noindent 
$V_{i}(\mathbf{x}^{*})$, shorthand $V_{i}^{*}$, is equal to $s_{X,i}(\mathbf{x}^{*})$ and
is a counterfactual random variable in a hypothetical world in which
$P(\mathbf{X}=\mathbf{x}^{*})=1$\textbf{. } 

\vspace{3mm}

\noindent 
$\bar{Y}^{*}_{n}=\frac{1}{n}\sum_{i=1}^{n}Y_{i}^{*}$.

\vspace{3mm}

\noindent 
 $p_{C}(\mathbf{c})=P\left(\mathbf{C}=\mathbf{c}\right)$

\vspace{3mm}

\noindent 
 $g(\mathbf{x}|\mathbf{w})=P\left(\mathbf{X}=\mathbf{x}\mid\mathbf{W}=\mathbf{w}\right)$

\vspace{3mm}

\noindent 
$g_{i}(x|w)=P\left(X_{i}=x|W_{i}=w\right)$

\vspace{3mm}

\noindent 
$p_{Y}(\mathbf{y}|\mathbf{v,w})=P\left(\mathbf{Y=y}|\mathbf{V=v,W=w}\right)$

\vspace{3mm}

\noindent 
$p_{Y,i}(y|v)=P\left(Y_{i}=y|V_{i}=v,W_i=w\right)$

\vspace{3mm}

\noindent 
$h_{i}(v,w)=P\left(V_{i}=v,W_i=w\right)$ 

\vspace{3mm}

\noindent 
$h_{i,x^{*}}(v,w)=P\left(V_{i}^{*}=v,W_i=w\right)$

\vspace{3mm}

\noindent 
$m(v,w)=\sum_{y}y\,p_{Y}(y|v,w)$ is the conditional expectation of $Y$
given $V=v,W=w$.

\vspace{3mm}

\noindent 
 $\bar{h}(v_{i},w_i)=\frac{1}{n}\sum_{j=1}^{n}h_{j}(v_{i},w_i)$

\vspace{3mm}

\noindent 
 $\bar{h}_{x^{*}}(v_{i},w_i)=\frac{1}{n}\sum_{j=1}^{n}h_{j,x^{*}}(v_{i},w_i)$

\vspace{3mm}



\noindent 
$D_n(\mathbf{o})$ is the nonparametric influence function under assumptions (A1) and (A4).

\vspace{3mm}

\noindent 
$D_n^C(\mathbf{o})$ is an influence function conditional on $\mathbf{C}=\mathbf{c}$.

\vspace{3mm}

\noindent 
$K_{max,n}=max_{i}\{K_{i}\}$

\vspace{3mm}

\noindent 
$\sqrt{C_{n}}$ is the rate of convergence in Theorem 1.

\begingroup
\setstretch{1.0}

\bibliographystyle{jasa}
\bibliography{references}
\endgroup

\end{document}